\newtheorem{theorem}{Theorem}
\begin{document}
\title{Nonadaptive Mastermind Algorithms for String and Vector Databases, with Case Studies}

\author{Arthur U. Asuncion and Michael T. Goodrich\footnote{A. Asuncion and M. Goodrich are with the Department
of Computer Science, University of California, Irvine, CA, 92697.
E-mail: \{asuncion, goodrich\}@ics.uci.edu}}

\maketitle

\begin{abstract}
In this paper, we study sparsity-exploiting Mastermind algorithms for attacking 
the privacy of an entire database of character strings or vectors, such 
as DNA strings, movie ratings, or social network friendship data.  Based on reductions to 
nonadaptive group testing, our methods are able to take advantage of minimal 
amounts of privacy leakage, such as contained in a single bit that indicates if 
two people in a medical database have any common genetic mutations, or if two 
people have any common friends in an online social network.  We analyze our 
Mastermind attack algorithms using theoretical characterizations that provide
sublinear bounds on the number of queries needed to clone the database, as well as 
experimental tests on genomic information, collaborative filtering data, 
and online social networks.  By taking advantage of the generally sparse nature 
of these real-world databases and modulating a parameter that controls query sparsity, 
we demonstrate that relatively few nonadaptive queries are needed to recover a 
large majority of each database.
\end{abstract}

\section{Introduction}
Privacy and data protection are important and growing concerns 
when dealing with character strings or vector data. Medical databases are 
constrained by Health Insurance Portability and Accountability Act (HIPAA) rules 
to keep identifying data private, for instance. 
Such databases in the future will commonly store DNA strings of patients,
which will need to have their privacy protected for obvious reasons.
Likewise, attribute vectors, which reflect the presence or absence of
each of a large number of possible attributes, are common in biotechnology; 
for example, chemical attribute vectors (e.g., see~\cite{baldicompression07,swamidasssearch07})
indicate the presence or absence of each of about a million attributes.

Privacy concerns also exist for online social networks and other databases 
which store user preferences in vector form.  For instance, knowledge of a 
social network user's set of friends (representable as a row in an adjacency matrix)
is potentially a gateway privacy leak, for friendship overlaps have been shown to  
be sufficient to de-anonymize individuals across multiple social
networking sites~\cite{nv-dsn-09}.  Likewise, the movie rating vectors in the 
database used for the Netflix Prize contest consists of ratings of movies by 
individual users, which are generally deemed as sensitive information.
Full access to such databases may be constrained by privacy
agreements or legitimate proprietary reasons for keeping these databases
private, even as they allow for limited types of queries to be performed on them.

Each time a client queries such a database and it responds 
with an answer, it reveals some information about
its contents, even if the client and the database are using
a Secure Multiparty Computation (SMC) protocol (e.g., 
see~\cite{akd-spsc-03,bnp-fssmp-08,dfknt-uscfm-06,jks-tppgc-08,
jmcs-sddli-08,spol-tpdp-06,ss-ppsip-07,y-psc-82})
to process such a query.
Thus, we can provide a crude characterization of 
the risk of privacy loss in biological, medical, or proprietary databases
in terms of the existence of efficient algorithms that can take advantage 
of the data leakage present in query responses to
be able to replicate part or all of the content of the database.
We refer to such schemes as \emph{data-cloning} attacks.

Formally, in an \emph{algorithmic data-cloning attack}, a \emph{querier}, Bob, is
allowed certain types of queries to a database, ${\cal X}$, that belongs
to a \emph{data owner}, Alice. Bob's goal is to replicate all or a large part 
of ${\cal X}$ through as few  
queries on ${\cal X}$ as possible (and with low computational overhead).
In this paper, we focus on databases where ${\cal X}$ is a
collection ${\cal X}=(X_1,X_2,\ldots,X_g)$ of character strings or vectors,
over a fixed-size alphabet.  With respect to the types of databases we consider,
we assume that Alice is willing to process \emph{comparison}
queries from Bob, each of which consists of 
Bob providing a single vector $Q$ (which is not
necessarily revealed in plaintext to Alice) and, 
possibly using a Secure Multiparty Computation (SMC).
Alice reveals a response vector $(r_1,r_2,\ldots,r_g)$, where 
each $r_i$ is the score for some type of comparison of $Q$ with $X_i$.
In the simplest case, each score $r_i$ can be a single bit denoting 
whether the query $Q$ shares any common entries with $X_i$.
As mentioned above, the risk to this \emph{data-cloning} attack, then,
can be characterized by the number of queries 
and how much processing time is needed so that 
Bob can replicate all of ${\cal X}$ or a large portion of ${\cal X}$.

\subsection{Our Contributions}
Inspired by a game known as \emph{Mastermind}, we present a number of 
algorithms for performing a \emph{Mastermind attack} on an entire string 
or vector database, ${\cal X}=(X_1,X_2,\ldots,X_g)$, so as to clone all or
a large portion of ${\cal X}$.  All of our methods assume only the SMC 
protocol of Jiang {\it et al.}~\cite{jmcs-sddli-08}, where a querier, Bob,
issues a query string or vector, $Q$, and receives a vector of
responses $(r_1,r_2,\ldots,r_g)$, where each $r_i$ is a single numerical response 
score measuring the similarity of $Q$ and $X_i$ according to some
public metric.  Since vectors taken over a universe of size $c$ can be viewed as
character strings taken over an alphabet of size $c$, we will,
without loss of generality, focus our descriptions on the case when
${\cal X}$ consists of $g$ character strings.
We will also assume that each string in ${\cal X}$ is the same length, since
we can view smaller strings as being padded with an additional
character not in the original alphabet.

We show that repeated querying of such a database can clone all or a large
portion of it, often with a surprisingly small, \emph{sublinear} number of queries.
The risk profile we explore in each type of attack, then, is the
number of queries needed to execute it.
Specifically, let us suppose that ${\cal X}$ contains $g$ strings, each of length $n$, 
taken over an alphabet of size $c$,
with at least $g'\le g$ of these strings having at most $d<n$
differences from a public reference string, $R$.  We show that 
at least $g'$ of the strings in ${\cal X}$ can be cloned using
(at most) the following number of queries:
\[
 2(c-1)(2d\log n + \min\{d\log g, d^2\log (en/d)\}).
\]

This result applies to situations, common in
many real-world databases (e.g. ~\cite{baldicompression07,baldidcc08,swamidasssearch07}), 
where strings in the database
can be characterized in terms of a small number of differences with a
reference string, $R$.

We also provide several case studies showing empirical data that 
demonstrates that our
randomized attack can work effectively on real-world
databases. For instance, 
we apply our attack to a database of mitochondrial DNA (mtDNA) strings
and the database of movie-ratings vectors provided for the Netflix
Prize contest, showing that large portions of these databases can be
cloned using a number of queries that is much smaller than the length
of the strings or vectors in these databases.

If, in practice, Bob learns more than the information contained in the 
response vector $(r_1,r_2,\ldots,r_g)$, that only strengthens his attack. The
point of this paper is that even with just the information leaked in
the responses, Bob can construct a small number of query vectors that are sufficient to 
learn all or a sizeable fraction of the vectors in $\cal X$.
Moreover, our Mastermind attack is \emph{oblivious} (that is, \emph{nonadaptive}), 
in that Bob can construct all his query vectors in advance, so that
the format of no query depends on the outcome of another. 
We describe a randomized construction for Bob's query vectors, which
allows the attack to be fairly surreptitious, in that each query 
looks random (because it is random).

\section{Attack Scenarios}
Before describing our nonadaptive Mastermind attack in precise detail, 
we show how it applies to a wide variety of attack scenarios to provide
motivating examples.  We illustrate three such attack scenarios
below.

\subsection{Genetic Signatures}

\begin{figure}[]
\begin{center}
\includegraphics[width=3.2in]{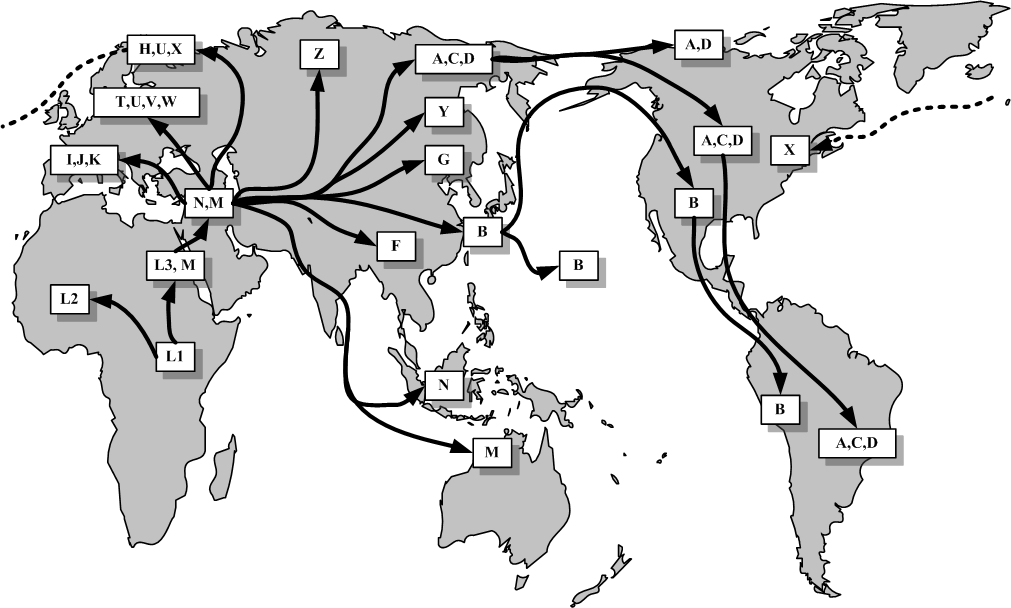}
\end{center}
\vspace*{-12pt}
\caption{
An illustration of
the pattern of human migration together with major mutations in
human mtDNA~\cite{ps-mdhe-05,brb-gppd-05}, which is only transfered along
the maternal line.
Each letter stands for a set of mutations from the reference
string, $R$.
Thus, determining locations of
differences with $R$ can reveal ethnic identity, sometimes to the
resolution of the village of maternal ancestry.
(Image, Copyright 2009, Michael T. Goodrich. Used with permission.)
}
\label{fig:mtDNA}
\end{figure}

Suppose the vectors in $\cal X$ represent the genetic signatures of people
in some population, such as a high school, college, or corporation.
Bob's goal in this Mastermind attack is to learn the genetic
signatures for as many people in his population of interest as
is reasonably possible.
He can employ his attack so long as there is a website or tool for
$\cal X$ that allows him to test a query vector $Q$ against the
vectors in $\cal X$ to determine which ones share a mutation with
$Q$, with respect to a reference $R$.  In mitochondrial DNA, 
the reference $R$ is roughly 16,500 base pairs long, 
but has only about 4,000 known mutations~\cite{brandon04,ruiz07}, 
suggesting that each vector in $\cal X$ is sparse relative to $R$.

In this example, Bob could be posing as a medical researcher and claim that 
his vectors are testing for combinations of genetic markers for disease. 
Alternatively he could claim to be a forensic analyst with DNA from a 
crime scene, which he wants to test against members of $\cal X$ 
(in this case, he is likely to receive a similarity score between his 
query $Q$ and the vectors in $\cal X$, which he can easily convert 
into an overlap-detection bit). In either case, a minimum amount of 
overlap information can allow him to learn the entire genetic signatures 
of a large number of members of $\cal X$.

The privacy implications of such an innocuous attack are significant.  
Alice's genetic signature could then be used by an unethical employer or 
insurance company to discriminate against her based on his risks for future diseases.
Also, as illustrated in Figure~\ref{fig:mtDNA}, it is possible using a genetic signature derived from
a short string of Alice's mitochondrial DNA to trace her 
maternal lineage to an 
ancestral location~\cite{brb-gppd-05,ps-mdhe-05}, which is
information that could then be used for ethnic 
discrimination~\cite{harihara92}.

\subsection{Social Network Friendship Ties}

\begin{figure}[]
\begin{minipage}[b]{0.5\linewidth}\centering
\includegraphics[height=1.7in]{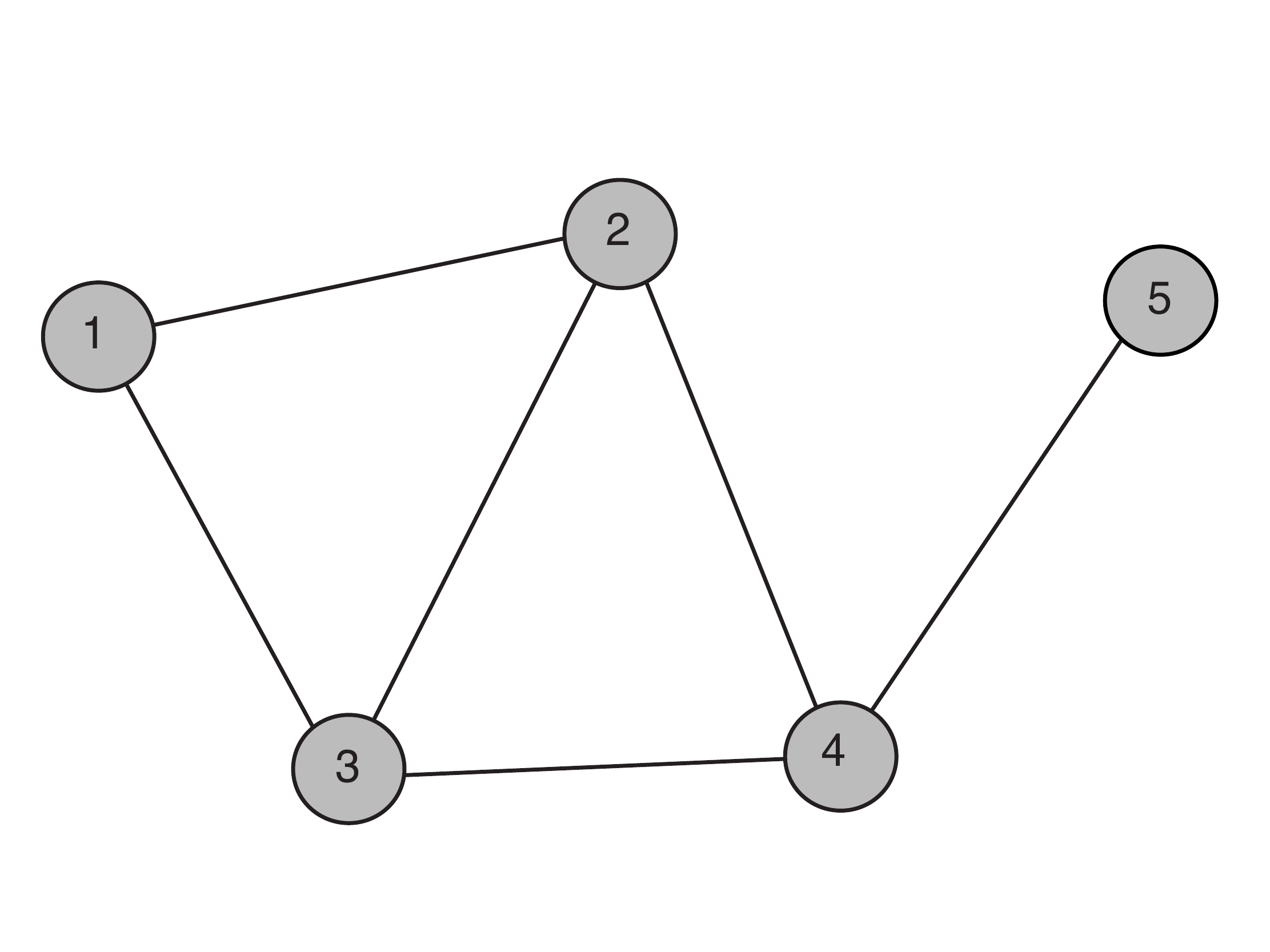}
\end{minipage}
\begin{minipage}[b]{0.5\linewidth}
\begin{tabular}{c|ccccc}
 & 1 & 2 & 3 & 4 & 5 \\
\hline
1 & 0 & 1 & 1 & 0 & 0 \\
2 & 1 & 0 & 1 & 1 & 0 \\
3 & 1 & 1 & 0 & 1 & 0 \\
4 & 0 & 1 & 1 & 0 & 1 \\
5 & 0 & 0 & 0 & 1 & 0 
\end{tabular}
\end{minipage}
\caption{An example graph and its adjacency matrix.}
\label{fig:matrix}
\end{figure}

Suppose the vectors in $\cal X$ represent the rows of the 
adjacency matrix (e.g., Figure~\ref{fig:matrix}) defined by the friendship ties for an online social network, 
like Facebook, possibly restricted to the population in a
specific city, college, high school, or large corporation.
In this scenario, Bob wants to learn the friendship relationships of
as many people as possible. For instance, he may wish to do racial
profiling~\cite{Lewis2008330} or do a cross-networking
identification attack~\cite{nv-dsn-09}, since 
89\% of Facebook users use their real names~\cite{gah-irpos-05}.

In this case, Bob's query vectors correspond to a relatively small
number of pseudonyms that Bob creates in the
social network and for which he defines a certain number of random friendship ties.
For instance, he could create such ties using automated social
engineering techniques (e.g., using the name of an affiliated city,
college, etc.) as well as the property that 
a fairly large percentage of social networking users are likely to accept
random friendship requests from people in their community 
(roughly 10 to 25\% of student Facebook 
users accept random friendship requests from people who say they are
in the same university~\cite{st-snf-07}).
Given his set of pseudonyms, Bob employs the group-testing
attack by having each of his pseudonyms ask the social networking site 
if this pseudonym shares any friends with the people in Bob's
population of interest.
Note that he will receive a useful response vector from everyone that
has privacy settings that allow for testing for mutual friends in
common.
That is, even if someone chooses to share friendship
information only with ``friends of friends,'' which is one of the more
restrictive standard privacy settings in Facebook, Bob can still get
valid responses for his queries with respect to such people.
Moreover, if Bob employs an oblivious group-testing attack, he can
use the same set of pseudonyms for everyone whose privacy he is
attacking.
Thus, once he has set up his pseudonyms, he can target 
the privacy of any user in
the online social network at will.

\subsection{User Preference Data}
Suppose the vectors in $\cal X$ represent the preferences of people
in a site, such as Amazon or Netflix,
that employs collaborative filtering to support product recommendations.
Specifically, we assume in this scenario that products are numbered
$1$ to $k$ and each vector $X_i$ in $\cal X$ has a discrete rating (e.g. 1-5 stars, 
or a missing rating) in position $j$, provided by user $i$.
Bob's goal in this scenario is to discover as many vectors in $\cal X$ as 
reasonably possible and in 
so doing discover the product preferences of a large number of
targeted people.
His motivation could, for instance, be economic, in that he may want
to open an online store that caters to
a specific demographic; hence, we may want to learn the product preferences for
a known population of people in this group.
In terms of information leakage, all that is needed in order to allow
for Bob's group-testing attack to work is for the collaborative
filtering site have a way for him to create pseudonyms, have these
pseudonyms rate products, and allow for these pseudonyms to test if 
they share any ratings in common with users in the target
population.
So long as the collaborative filtering web site allows for users to
check for overlapping scores with other users, Bob can employ the
nonadaptive Mastermind attack.

\subsection{Exploiting Sparsity}
The above set of attack
scenarios are illustrative of the risks to privacy that the
group-testing attack provides, in that it can greatly amplify 
the information gained from just a relatively small number
of single-bit privacy leaks.
The risk to the group-testing attack can
be characterized in terms of the number of 
queries and how much processing time is needed so that 
Bob can replicate a large portion of ${\cal X}$.
As we will elaborate in Section~\ref{sec:sparsity}, the critical 
factor here is a sparsity parameter, $d$, which, in a group testing
context, refers to the small number of ``defective'' items in the large
group.

Interestingly, each of the attack scenarios mentioned above
possess such a parameter,
allowing for Bob to employ efficient Mastermind attacks with a
relatively small number of queries.
For example, an individual's genetic signature will
typically have a relatively small number of indicators for mutations
with respect to a reference DNA string -- 
with mitochondrial DNA, most people have fewer than 100
mutations with respect to a commonly-used reference string.
Furthermore, most people in social networking sites, such as
Facebook, have less than a few hundred friends.
Likewise, most collaborative filtering preference vectors, such 
as in the Netflix Prize contest, have
ratings for at most a few hundred items.
Thus, there are several modern contexts 
that have all the pieces in place to allow for the Mastermind
attack to be used.

It is worth noting that realistic attacks can also be constructed in many other domains.  
For instance, sensitive image data, such as captured by biometric devices, may be 
represented as sparse vectors, making it susceptible to a 
Mastermind attack, especially when efficient tools exist for
comparing a query (e.g. a fingerprint or an iris scan) to the entire
database.

\section{Background and Related Work}

We give a brief background of the Mastermind game and attacks inspired by that game,
as well as related work on privacy models and attempts to mitigate privacy leaks.

\subsection{Mastermind}
Adapting the terminology of the Mastermind attack~\cite{g-magd-09} 
to attacks on an entire database, we discuss in this section the
relationship between the Mastermind attack and the Mastermind
boardgame.  Mastermind~\cite{c-m-83,k-cmm-77} 
is a two-player board game, 
which is played between a \emph{codemaker} and a
\emph{codebreaker}, using colored pegs
(Figure~\ref{fig-mastermind}).
Mastermind begins with
the codemaker selecting a character string, $X$, of length $n$, using an
alphabet of size $c$, whose members are called ``colors.''
The codebreaker then makes a sequence of queries, 
$Q_1,Q_2,\ldots$, about $X$'s identity.
For each guess $Q_i$, the codemaker provides a score on how well
$Q_i$ matches $X$.
In the board game, this is done using colored pegs, but we assume in this
paper that the score is simply a matching function,
$
b(Q_i)=|\{j\colon\, Q_i[j]=X[j]\}|,
$
which counts the number of places where $Q_i$ and $X$ match.
The codebreaker, of course, is trying to discover $X$ using 
a small number of guesses.

\begin{figure}
\begin{center}
\includegraphics[height=2.8in]{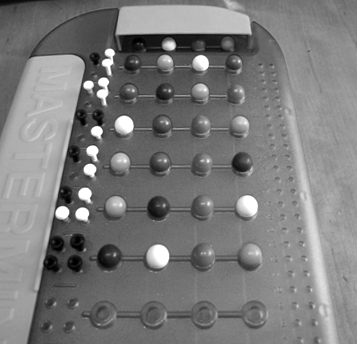}
\caption{The Mastermind game. The four large pegs in the middle are
used for guessing. The four smaller peg locations on the left are
used to score each guess---with black-peg and white-peg scores. 
(Image, Copyright 2009, Michael T. Goodrich. Used with permission.)
}
\label{fig-mastermind}
\end{center}
\end{figure}

Chv{\'a}tal~\cite{c-m-83} studied the combinatorics of 
the general Mastermind game, showing that it
can be solved in polynomial time using 
$
2n \lceil\log c\rceil + 4n
$
guesses. 
Chen {\it et al.}~\cite{cch-fhcaq-96} showed how this can be
improved to
$
2n \lceil\log n\rceil + 2n + \lceil c/n\rceil + 2
$
guesses and
Goodrich~\cite{g-oacmg-09} showed how this bound can be improved to
$ n\lceil \log c\rceil + \lceil(2-1/c)n\rceil + c$.
Unfortunately, from the perspective of the cloning problem,
all of these algorithms are \emph{adaptive}, in that they use
results of previous queries to construct future queries.
Adaptive algorithms can only be used effectively for the interactions between
a single pair of strings. For a sequence of queries to be used against an
entire database of strings, we need a \emph{nonadaptive} algorithm, 
that is, an algorithm where queries are not dependent upon answers
from previous queries, which is
equivalent to the codebreaker making all his guesses in advance.

Chv{\'a}tal~\cite{c-m-83} also gives an existence proof for a 
nonadaptive method for solving Mastermind.
If the number of possible colors, $c\le n^{1-\epsilon}$, 
for some constant $\epsilon>0$, which will almost
always be the case for biological databases, Chv{\'a}tal shows the existence
of a nonadaptive method using only 
\[
G=(2+\epsilon)n \frac{1+2\log c}{\log n - \log c}
\]
guesses.
In fact, he shows that making $G$ guesses at random will be sufficient to
determine a unique solution with high probability, using only the
$b(Q_i)$ type of scores.
Unfortunately, this existence proof does not immediately lead to a
polynomial-time algorithm.
Indeed, it is NP-complete to determine if a collection of Mastermind guesses
with $b(Q_i)$ type of responses is satisfiable~\cite{g-oacmg-09}.  Nonetheless,
in this paper, we will show that Mastermind attacks based on reductions to
group testing can efficiently clone a sparse database of interest
using a sublinear number of nonadaptive queries.

\subsection{Related Privacy Models}

Following a framework by 
Bancilhon and Spyratos~\cite{bs-pirdd-77},
Deutsch and Papakonstantinou~\cite{dp-pidp-05}
and Miklau and Suciu~\cite{ms-faidd-07}
give related models for
characterizing privacy loss in information releases from a database, 
which they call \emph{query-view security}.
In this framework, there is a
secret, $\cal S$, that the data owner, Alice, is trying to protect.
Attackers are allowed to ask legal queries of the database,
while Alice tries to protect the information that these
queries leak about $\cal S$.
While this framework is related to the data-cloning attack,
these two are not identical, since in the data-cloning attack
there is no specifically sensitive part of the
data. Instead, Alice, is trying to limit releasing too much of her data
to Bob rather than protecting any specific secret.
Similarly, 
Kantarcio\v{g}lu {\it et al.}~\cite{kjc-wdmvp-04} study privacy models that quantify
the degree to which data mining searches expose private information, but this
related privacy model is also not directly applicable to the data-cloning attack.

There has been considerable recent work on data modification
approaches that can help protect the privacy or
intellectual property rights of a database by modifying its content.
For example, several researchers
(e.g., see~\cite{ak-wrd-02,ahk-swrd-03,g-qpwrd-03,%
sv-hcwsd-04,sap-rprd-03,s-rard-07})
advocate the use of data watermarking to protect data rights.
In using this technique, data values are altered to make it easier,
after the fact, to track when someone has stolen information
from a database.
Of course, by that point, the data has already been cloned.
Alternatively, several
other researchers (e.g.,~\cite{lefevre05,samarati01,samarati98,mw-cok-04,%
afkmptz-at-05,bkbl-ekuct-07,zyw-pekcd-05,ba-dpoka-05}) propose
using \emph{generalization} or \emph{cell suppression}
as methods for 
achieving quantifiable privacy-preservation in databases.
These techniques alter data values to protect sensitive parts of the data,
while still allowing for data mining activities to be performed on the
database.
We assume here that Alice is not
interested in data modification techniques, however, for we believe that 
accuracy is critically important in several database applications.
For example, even a single
base-pair mutation in a DNA string can indicate the existence of an increased
health risk.

As mentioned above, we allow for the queries Bob asks to be answered
using SMC protocols, which reveal no additional information between the query
string $Q$ and each database string $X_i$ other than the response score $r_i$.
Such protocols have been developed for
the kinds of comparisons that are done in
genomic sequences (e.g., see~\cite{akd-spsc-03,da-smc-01,FNP04}).
In particular, Atallah {\it et al.}~\cite{akd-spsc-03} and
Atallah and Li~\cite{al-sosc-05}
studied
privacy-preserving protocols for edit-distance sequence comparisons, such as
in the longest common subsequence (LCS) 
problem (e.g.,~\cite{h-lsacm-75, ir-acvlc-08, uah-bclcs-76}).
Troncoso-Pastoriza {\it et al.}~\cite{tkc-pperd-07}
described a privacy-preserving protocol for regular-expression searching 
in a DNA sequence.
Jha {\it et al.}~\cite{jks-tppgc-08} give
privacy-preserving protocols for computing edit distance and
Smith-Waterman similarity scores between two genomic sequences, improving
the privacy-preserving algorithm of Szajda {\it et al.}~\cite{spol-tpdp-06}.
Aligned matching results between two strings can be done in a
privacy-preserving manner, as well, using privacy-preserving set
intersection protocols (e.g., 
see~\cite{ae-nepps-07,FNP04,vc-ssica-05,ss-ppsip-07,ss-ppsib-08})
or SMC methods for
dot products (e.g., see~\cite{dfknt-uscfm-06,gmw-hpamg-87,y-psc-82}).
In addition, the Fairplay system~\cite{bnp-fssmp-08} provides a general
compiler for building such computations.

Du and Atallah~\cite{da-psrda-01} study an SMC protocol
for querying a string $Q$ 
in a database of strings, ${\cal X}$, as in our framework,
where comparisons are based on approximate
matching (but not sequence-alignment).
Their SMC protocols for performing such queries provide
a best match, not a score for each string in the database.
Thus, their scheme would not be applicable in the attack framework we are
considering in this paper.
The SMC method of Jiang {\it et al.}~\cite{jmcs-sddli-08}, on the other hand,
is directly applicable.
It provides a vector of scores comparing a string (or vector) 
$Q$ to a sequence of
strings (or vectors), as we require in this paper.
Thus, our Mastermind methods can be viewed as an attack on repeated use of the
SMC protocol of Jiang {\it et al.}

Goodrich~\cite{g-magd-09} studies the 
problem of discovering a single DNA string from 
a series of genomic Mastermind queries.
All his methods are sequential and adaptive, however, so the only way they
could be applied to the data-cloning attack on an entire biological database is
if Bob were to focus on each string $X_i$ in ${\cal X}$ in turn. 
That is, he would have to gear his
queries to specifically discovering each $X_i$ 
in $n$ distinct ``rounds''
of computation, each of which uses a lot of string-comparison queries. 
Such an adaptation of Goodrich's Mastermind attacks to perform data cloning,
therefore, would be
prohibitively expensive for Bob. 
Our approach, instead, is based on performing a nonadaptive Mastermind
attack on the entire database at once.

We note that others have investigated de-anonymization techniques on both
social networks~\cite{backstrom2007} and Netflix data~\cite{narayanan2008}.
These works are complementary to our goal of cloning the databases themselves.

\section{Exploiting Sparsity in an Algorithmic Data-Cloning Attack}
\label{sec:sparsity}
In this section, we describe the details of our nonadapative Mastermind data-cloning attack.
It is often the case that all or a large fraction of the strings in a
real-world string database can be 
characterized in terms of a small number of differences with a public
reference string. In these situations, which are quite common, we can
apply a reduction to nonadaptive group testing, which results in an
an efficient Mastermind attack as we will see.

\subsection{Non-adaptive Combinatorial Group Testing}
\emph{Group testing} 
was introduced by Dorfman~\cite{d-ddmlp-43}, during World War II,
to test blood samples.
The problem he addressed was to design an efficient way to detect the
few thousand blood samples that were contaminated with syphilis out
of the millions that were collected.
His idea was to pool drops of blood from multiple samples and test
each pool for the syphilis antigen. By carefully arranging the
group tests and then discovering which groups
tested positive and which ones tested negative 
he could then identify the contaminated
samples using a small number of group tests (much
smaller than the number needed to explicitly test each individual blood sample),
thereby sparing thousands of G.I.'s from needless disease exposure.
In this paper, we show that Dorfman's humanitarian discovery has an
unfortunate dark side when it comes to privacy protection, 
for it enables privacy leaks to be amplified in a data-cloning attack.

In the \emph{combinatorial group testing} problem
(e.g., Du and Hwang~\cite{dh-cgtia-00}), one is given a set $S$ of
$n$ items, at most $d$ of which are ``defective,''
for some parameter $d\le n$, and one is interested in
exactly determining which of the items in $S$ are defective.
One can form a test from any subset $T$ of $S$ and in a single step
determine if $T$ contains any defective items or not.
If one can use information from the result of a test in formulating
the tests to make in the future, then the method is said to be
\emph{adaptive}. If, on the other hand, one cannot use the results
from one test to determine the makeup of any future test, then the
method is said to be \emph{nonadaptive}.
For the application to the Mastermind attack, we are interested in
nonadaptive methods.

There are several existing nonadaptive group testing methods~\cite{dh-cgtia-00}, 
but these approaches are meant for a more general context than 
in our database cloning attack.
In particular, these methods are designed to work for \emph{any} set
of items having $d$ defective members.
In our case, we are instead interested in specific sets of items that
are derived from the database we are interested in cloning.
Because of this, we can, in fact, derive improved bounds than would
be implied by existing combinatorial group-testing methods.

Suppose we are given a collection, $\cal C$,
of sets, 
$
{\cal C} = \{S_1,S_2,\ldots, S_g\}$,
which are not necessarily distinct, such that each set $S_i$
contains $n$ items, at most $d$ of which are ``defective.''
We want to design a nonadaptive group testing scheme that can
exactly identify the subset, $D_i$, of
at most $d$ defective items in each set $S_i$ in $\cal C$.
Our approach to solving this problem
is based on a 
randomized approach used by Eppstein {\it et al.}~\cite{egh-icgtr-05}.

A nonadaptive group testing algorithm
can actually be viewed as a $t \times n$ 0-1 matrix, $M$.
Each of the $n$ columns of $M$ corresponds to one of the $n$ items
and each of the $t$ rows of $M$ represents a test. 
If $M[i,j]=1$, then item $j$ is included in test $i$, and
if $M[i,j]=0$, then item $j$ is not included in test $i$.
Since this is a nonadaptive testing scheme, 
we assume that
no test depends on the results of any other.
That is, every row of the matrix $M$ is defined in advance of any
test outcomes.
The analysis question, then, is to determine how large $t$ must be 
for the results of these tests to provide useful results.

Let $C$ denote the set of columns of $M$.
Given a subset $D$ of $d$ columns in $M$, and a specific column $j$
in $C$ but not in $D$,
we say that $j$ is \emph{distinguishable} from $D$ if there is a row
$i$ of $M$ such that $M[i,j]=1$ but $i$ contains a $0$ in each of the
columns in $D$.
If each column of $M$ that is in $C$ and not in $D$ is 
distinguishable from $D$, then we say that $M$ is 
\emph{$D$-distinguishing}.
Furthermore, we generalize this definition, so that
if $M$ is $D_i$-distinguishing for each subset, $D_i$, in a collection,
$
{\cal D}=\{D_1,D_2,\ldots,D_g\}$, 
of columns in $C$,
then we say that $M$ is $\cal D$-distinguished.
Finally, we say that the matrix $M$ is $d$-{\it disjunct} 
(e.g., see Du and Hwang~\cite{dh-cgtia-00}, p.~165)
if it is $\cal D$-distinguished for the collection, $\cal D$,
of all of the 
$
{n \choose d}
$
subsets of size
$d$ of $C$.

Note that if $M$ is $D$-distinguishing, 
then it leads to a simple testing
algorithm with respect to $D$.
In particular, suppose $D$ is the set of defective items
and we perform all the tests in $M$.
Note that, since $M$ is $D$-distinguishing,
if an item $j$ is not in $D$, then there is a test
in $M$ that will determine the item 
$j$ is not defective, for $j$ would belong to a test that must
necessarily have no defective items.
So we can identify $D$ in this case---the set $D$ consists of
all items that have no test determining them to be nondefective.

Of course, if $M$ is $d$-disjunct, then this simple detection algorithm works 
for any set $D$ of up to $d$ defective items in $C$.
Unfortunately, building such a matrix $M$ that is $d$-disjunct
requires $M$ to have $\Omega(d^2\log n/\log d)$ 
rows~\cite{dh-cgtia-00,R94}.
So we will instead build a matrix that is $\cal D$-distinguished for
the collection, $\cal D$, of defective subsets determined by the sets
of items in $\cal C$, with high probability.

Given a parameter $t$, which is a multiple of $d$,
we construct a $2t\times n$ matrix $M$ as follows.
For each column $j$ of $M$, we choose $t/d$
rows uniformly at random and we set the values of these entries to $1$, with
the other entries in column $j$ being set to $0$.
Note, then, that for any set $D$ of up to $d$ defective items,
there are at most $t$ tests that will have positive outcomes
(detecting defectives) and,
therefore, at least $t$ tests that will have negative outcomes.
Our desire, of course, is for 
columns that correspond to samples that are distinguishable from
the defectives ones should belong to at least one negative-outcome test.
So, let us focus on bounds for $t$ that allow for such a matrix $M$
to be chosen with high probability.

Let $C$ be a set of (column) items having a fixed 
subset $D$ of $d$ defective items.
For each (column) item $j$ in $C$ but not in $D$, let $Y_j$ denote the
0-1 random variable that is $1$ if $j$ is falsely identified as
a defective item by $M$ (that is, $j$ is not included in a test of
items distinguished from those in $D$).
Let $Y_j$ be $0$ otherwise.
Observe that the $Y_j$'s are independent, since $Y_j$ depends only on
whether the choice of rows we picked for column $j$ collide with the
at most $t$ rows of $M$ picked for the columns corresponding
to items in $D$.
There are a total of $2t$ rows, at most $t$ of which contain a test
with a defective item.
Thus, the probability of any non-defective item joining any
particular test having a defective item in it is at most $1/2$;
hence, any $Y_j$ is $1$ (a false positive)
with probability at most $2^{-t/d}$, since each item is included in 
$t/d$ tests at random.

Let $Y=\sum_{j=1}^n Y_j$, and note that
the expected value of $Y$, $E(Y)$, is at most ${\hat\mu}=n/2^{t/d}$.
Thus, if ${\hat\mu}\le 1$,
we can use Markov's inequality to bound
the probability of the (bad) case when $Y$ is non-zero as follows:
\[
\Pr(Y\ge 1) \le E(Y) \le {{\hat\mu}} = \frac{{n}}{2^{t/d}} .
\]
Thus, if we set
\[
t\ge 2d\log n,
\]
then $M$ will be $D$-distinguishing with probability
at least $1-1/n$, for any particular subset of defective items, $D$,
from a set $C$ of $n$ items.
Likewise, if we set 
\[
t\ge 2d\log n+d\log g,
\]
then $M$ will be $\cal D$-distinguished, with probability at least $1-1/n$,
for the collection of $g$ subsets of defective items determined by
the sets in $\cal C$.
Finally, we can use the fact 
(e.g., see Knuth~\cite{k-acp-73})
that 
\[
{n\choose d}<(en/d)^d,
\]
so that
if we set 
\[
t\ge 2d\log n + d^2\log (en/d),
\]
then $M$ will be
$d$-disjunct with probability at least $1-1/n$, which implies $M$ will work
for any subset of at most $d$ defective items.
Therefore, we have the following.

\begin{theorem}
\label{thm:math}
If 
\[
t \ge 2d\log n + \min\{d\log g, d^2\log (en/d)\},
\]
then a $2t\times n$ random matrix $M$,
constructed as described above,
is $\cal D$-distinguished, with 
probability at least $1-1/n$, for any given collection, 
${\cal D}=\{D_1,D_2,\ldots,D_g\}$, of $g$
subsets of size $d$ of the $n$ columns in $M$.
\end{theorem}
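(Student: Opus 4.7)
The plan is to adopt the random construction outlined just before the theorem: independently for each of the $n$ columns $j$, sample $t/d$ of the $2t$ rows uniformly at random and set those entries to $1$. I would first establish that this $M$ is $D$-distinguishing with probability at least $1-1/n$ for a single fixed size-$d$ subset $D$ of columns, and then union-bound over the collection $\cal D$ in two different ways, taking whichever bound is better.

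Fix a size-$d$ subset $D$. The $d$ columns of $D$ together mark at most $d\cdot(t/d)=t$ of the $2t$ rows as \emph{positive} rows (rows that already contain a $1$ from some column of $D$). For each $j\notin D$, let $Y_j$ be the indicator that column $j$ is \emph{not} distinguished from $D$, i.e., that all $t/d$ rows chosen for $j$ happen to lie among the positive rows. Since the positive rows form a set of size at most $t$ out of $2t$, each of column $j$'s uniformly sampled rows lands in that set with probability at most $1/2$; the $t/d$ row choices for $j$ are independent, so $\Pr[Y_j=1]\le 2^{-t/d}$. Note that strict mutual independence of the $Y_j$'s is not actually needed: linearity of expectation gives $E\bigl[\sum_{j\notin D} Y_j\bigr]\le n\cdot 2^{-t/d}$, and Markov's inequality yields $\Pr\bigl[\sum_j Y_j\ge 1\bigr]\le n/2^{t/d}$. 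Setting $t\ge 2d\log n$ drives this bound to $1/n$, which establishes $D$-distinguishing with probability at least $1-1/n$.

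To handle the whole collection $\cal D$, I would apply one of two union bounds. The direct approach union-bounds over the $g$ specified subsets: requiring $g\cdot n/2^{t/d}\le 1/n$ gives $t\ge 2d\log n + d\log g$. The alternative is to prove the stronger $d$-disjunct property by union-bounding over all $\binom{n}{d}$ subsets of size $d$, and then invoking the bound $\binom{n}{d}\le (en/d)^d$, which yields $t\ge 2d\log n + d^2\log(en/d)$. Since both options succeed, combining them gives the stated threshold $t\ge 2d\log n + \min\{d\log g,\ d^2\log(en/d)\}$.

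The subtlest point, and the one I would write out carefully rather than glossing over, is the per-column failure bound. The statements ``at most $t$ positive rows'' and ``each sample lands there with probability at most $1/2$'' are cleanest under conditioning on the row choices made for the columns of $D$: with those choices fixed, the positive row set is frozen of size at most $t$, column $j$'s $t/d$ samples are fully independent of that set, and the bound $\Pr[Y_j=1]\le 2^{-t/d}$ is rigorous. Since the per-$D$ failure bound holds for every realization of $D$'s rows, it holds unconditionally, and the rest of the argument proceeds by the union bounds described above.
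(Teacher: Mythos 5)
Your proposal is correct and takes essentially the same route as the paper: bound the per-$D$ failure probability by $n/2^{t/d}$ via expectation and Markov, then either union-bound over the $g$ given subsets (giving the $d\log g$ term) or over all $\binom{n}{d} < (en/d)^d$ subsets to get $d$-disjunctness (giving the $d^2\log(en/d)$ term), with the case split matching whichever term attains the minimum. Your remark on conditioning is in fact slightly more careful than the paper, which asserts independence of the $Y_j$'s even though only linearity of expectation is needed.
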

\begin{proof}
Let $\cal D$ be a given collection of $g$ (not necessarily distinct)
subsets of size $d$ of the $n$ columns in $M$.
If 
\[
d^2\log (en/d) > d\log g, 
\]
then $M$ is $\cal D$-distinguished by construction, with probability
at least $1-1/n$.
If, on the other hand,
\[
d^2\log (en/d)\le d\log g,
\]
then $M$ constructed as above
is $d$-disjunct, with probability at least $1-1/n$,
which implies it is $\cal D$-distinguished w.h.p.~for any 
collection $\cal D$ of subsets of size $d$ 
of the $n$ columns of $M$.
\end{proof}

As mentioned above,
this is a way of constructing a simple nonadaptive group testing method for
identifying the defective items in the collection, $\cal D$, of
subsets of up to $d$ defective items determined by the sets in $\cal C$.

\subsection{Reducing Mastermind to Group Testing}
In this section,
we describe how to use nonadaptive group testing to
construct an efficient Mastermind cloning attack.
\label{sec:arb}
Consider the case when ${\cal X}$ is a database of $g$ strings of length
$n$ each, with each of them having
at most $d\le n$ differences with a reference string $R$.
We assume that each string in ${\cal X}$ is drawn from an alphabet of $c$
characters, which, without loss of generality, we assume are integers in the
range $[0,c-1]$.

Suppose, like before, that we have a $2t\times n$ 
nonadaptive group testing matrix, $M$, for a 
set of size $n$ having at most $d$ defectives,
where 
\[
t\ge 2d\log n + \min\{\log g, d^2\log (en/d)\}.
\]
As before,
we begin our general Mastermind cloning attack by making a query 
for the reference string, $R$.
Let $r$ be the response score for the query for $R$.
Next, we create $c-1$ different string queries, $Q_{k,l}$ 
for each of the $2t$ tests in $M$, defined, for $l=1,2,\ldots,c-1$, as
follows:
\[
Q_{k,l}[j] = \left\{ \begin{array}{ll}
		R[j] & \mbox{if $M[k,j]=0$} \\
		(R[j] + l) \bmod c & \mbox{else.}
		\end{array}
		\right.
\]
Each such query against a string $X_i$ will have some response, $r_{k,l,i}$.
We interpret test $(k,l,i)$ as having a ``positive'' response, that is, it
does not detect a defective, if, in making the comparison of $Q_{k,l}$ with
the string $X_i$, the response
\[
r_{k,l,i}=r-b_{k,0,i},
\]
where $b_{k,0,i}$ is the number of
characters in $X_i$ matching their associated (color-$0$) location in $R$
at places where there are $1$'s in row $k$ of $M$.
Intuitively, each $1$ in row $k$ of $M$ indicates a place where we test
a deviation from the reference value in $R$ at that location to 
the color $l$ away. If none of these
locations is a match with the current $X_i$ string, then none of these 
locations take a color that is $l$ additive colors from their reference value. 
In other words, defective ``items'' in the
associated group testing method 
correspond to locations where $X_i$ differs from the reference string
with characters that are exactly $l$ away from their reference values. 

Of course, being able to determine if such a test for $Q_{k,l}$ against
string $X_i$ is ``positive'' or
``negative'' requires that we know the value $b_{k,0,i}$, which we don't 
immediately know.
We do immediately know the number, $b_k$, 
of $1$'s in row $k$ of $M$, however. And, after
we perform the queries for each $Q_{k,l}$ against a string $X_i$, 
we learn each response $r_{k,l,i}$.
That is, we have $c$ linear equations in $c$ unknowns from these
queries and their responses.  
Specifically, we have the equation,
$
b_k = b_{k,0,i} + b_{k,1,i} + \cdots + b_{k,c-1,i}$,
where $b_{k,l,i}$ denotes the number of places $j$ where there is a 
$1$ in row $k$ of $M$ and the character in position $j$ of $X_i$ is 
$l$ away from the reference, that is, places
where $X_[j] = (R[j] + l)\bmod c$ and $M[k,j]=1$.
We also have $c-1$ equations,
\[
r_{k,l,i} = r - b_{k,0,i} + b_{k,l,i},
\]
for $l=1,2,\ldots,c-1$,
which can each be rewritten as
$
b_{k,l,i} = r_{k,l,i} - r + b_{k,0,i} $.
This allows us to rewrite
\[
b_k = c\,b_{k,0,i} - (c-1)r + \sum_{l=1}^{c-1} r_{k,l,i}.
\]
Thus, we can determine the value of $b_{k,0,i}$ as
\[
b_{k,0,i} = \frac{b_k + (c-1)r - \sum_{l=1}^{c-1} r_{k,l,i}}{c} ,
\]
which in turn tells us which of the $Q_{k,l}$ tests are ``positive'' and
which ones are ``negative.''
Essentially, we are performing a combinatorial group test for each possible
shift we can make from a color in reference $R$.

Thus, if there are at most $d$ locations where $X_i$ differs from the
reference string and $M$ is $\cal D$-distinguished for the set of at most $d$
locations of difference for each string in ${\cal X}$, then this scheme will learn
the complete identity of each string in ${\cal X}$.
That is, this method will clone ${\cal X}$, with high probability.
Therefore, by Theorem~\ref{thm:math}, we have the following:

\begin{theorem}
Given a database
$ {\cal X} = (X_1,X_2,\ldots,X_g)$,
of strings of length $n$ defined over an alphabet of size $c$,
there is a nonadaptive Mastermind cloning method that can discover each
string in $\cal X$,
using $2t(c-1)$ tests, with probability at least $1-(c-1)/n$, where
$ t $ is the smallest multiple of $d$ such that 
\[
t\ge 2d\log n + \min\{d\log g, d^2\log (en/d)\},
\]
and $d\le n$ is
the maximum number of differences any string in ${\cal X}$ has with $R$.
\label{thm:sparse}
\end{theorem}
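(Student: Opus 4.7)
The plan is to directly verify that the construction described in Section~\ref{sec:arb} meets the stated guarantees by combining the algebraic recovery argument already sketched with Theorem~\ref{thm:math}, using a union bound over shift values. First I would fix any string $X_i$ in $\mathcal{X}$ and, for each shift $l \in \{1,\dots,c-1\}$, define $D_{i,l}$ to be the set of positions $j$ where $X_i[j] = (R[j]+l) \bmod c$. Because $X_i$ differs from $R$ in at most $d$ locations, the sets $D_{i,1},\dots,D_{i,c-1}$ are pairwise disjoint and each has size at most $d$. Thus for each fixed $l$ we obtain a collection $\mathcal{D}_l = \{D_{1,l}, D_{2,l},\dots,D_{g,l}\}$ of $g$ subsets of columns, each of size $\le d$, so Theorem~\ref{thm:math} applies to this collection with the stated choice of $t$.

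Next I would argue that whenever the random matrix $M$ is $\mathcal{D}_l$-distinguished, the outputs of the queries $Q_{k,l}$ against $X_i$ identify $D_{i,l}$ exactly. The key algebraic observation, already derived in the preceding discussion, is that the responses $r$, $r_{k,1,i},\dots,r_{k,c-1,i}$ together with the row weight $b_k$ form a system whose solution yields
\[
b_{k,0,i} \;=\; \frac{b_k + (c-1)r - \sum_{l=1}^{c-1} r_{k,l,i}}{c},
\]
and hence every $b_{k,l,i} = r_{k,l,i} - r + b_{k,0,i}$. The indicator ``row $k$ of $M$ detects a position in $D_{i,l}$'' is exactly the condition $b_{k,l,i} \ne 0$, so knowing all $b_{k,l,i}$ reduces the task of recovering $D_{i,l}$ to reading off the outcomes of a nonadaptive group test whose matrix is $M$ and whose defective set is $D_{i,l}$. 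Since $M$ is $\mathcal{D}_l$-distinguished by assumption, this recovers $D_{i,l}$ for every $i$. Repeating this across all $l$ reconstructs every difference position of every $X_i$, together with the shift applied there, which reconstructs $X_i$ itself.

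To control the failure probability, I would apply Theorem~\ref{thm:math} separately to each $\mathcal{D}_l$, obtaining failure probability at most $1/n$ per shift, and then take a union bound over the $c-1$ shifts to get total failure probability at most $(c-1)/n$. The total query count is $2t$ rows of $M$ each instantiated at $c-1$ nonzero shifts, yielding the stated $2t(c-1)$ queries (the single reference query for $R$ can be absorbed without affecting the asymptotic count or, equivalently, can be omitted since $r$ only enters the system through a constant that cancels when comparing positive versus zero entries of $b_{k,l,i}$).

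The main obstacle I would anticipate is the subtle point that the $c-1$ collections $\mathcal{D}_1,\dots,\mathcal{D}_{c-1}$ are \emph{not} independent events under the single random draw of $M$, so a union bound is genuinely needed rather than multiplication of success probabilities; fortunately Theorem~\ref{thm:math} bounds the failure event for each $\mathcal{D}_l$ individually, so the union bound goes through cleanly. A secondary technicality is verifying that the derivation of $b_{k,0,i}$ above is valid even when $b_{k,0,i} = 0$ or when all entries of $X_i$ in row~$k$ equal their reference values, but inspection of the linear system shows the expression is an identity regardless of these edge cases.
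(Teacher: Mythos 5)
Your proposal is correct and takes essentially the same route as the paper: per-shift defective sets $D_{i,l}$, recovery of $b_{k,0,i}$ (hence the test outcomes $b_{k,l,i}=0$ vs.\ $\ne 0$) from the linear system, Theorem~\ref{thm:math} applied to each collection $\mathcal{D}_l$, and a union bound over the $c-1$ shifts giving the $1-(c-1)/n$ bound and the $2t(c-1)$ query count. The only quibble is your parenthetical that the reference query can be omitted because $r$ ``cancels''---it does not (it contributes a residual $-r/c$ term to $b_{k,l,i}$)---but your primary option of absorbing that single extra query matches the paper's own accounting, so this does not affect correctness.
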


\section{Case Studies}
\label{sec:experiments}

To test the real-world risks of the nonadaptive Mastermind cloning attack, we applied 
our methods to case studies involving random samples from
a number of real-world string and vector databases,
including genomic and social network data.  
We briefly describe the data sets 
used and then discuss experimental results which reveal that relatively few 
tests are needed to recover large proportions of each database. 

\begin{table}[hbt!]
\centering
\begin{tabular}{|l|c|c|c|c|c|c|c|}
\multicolumn{1}{l}{ Name}  
&\multicolumn{1}{c}{Strings} 
&\multicolumn{1}{c}{Length}
&\multicolumn{1}{c}{Max Diff}
&\multicolumn{1}{c}{Colors}  \\
\hline
Genomic & 457 & 16,568 & 492 & 4 \\
Netflix & 1,000 & 17,700 & 1988 & 6 \\
Epinions & 2,000 & 131,827 & 517& 3\\
Slashdot & 2,000 & 82,144 & 378 & 3\\
Slashdot (All) & 82,144 & 82,144 & 428 & 3\\
Facebook-UNC & 18,163 & 18,163 & 3,795 & 2\\
Facebook-Unif & 1,000 & 72,261,577 & 2,164 & 2\\
\hline
\end{tabular}
\end{table}

\begin{figure}[hbt!]
\centering
\quad\includegraphics[scale=0.3]{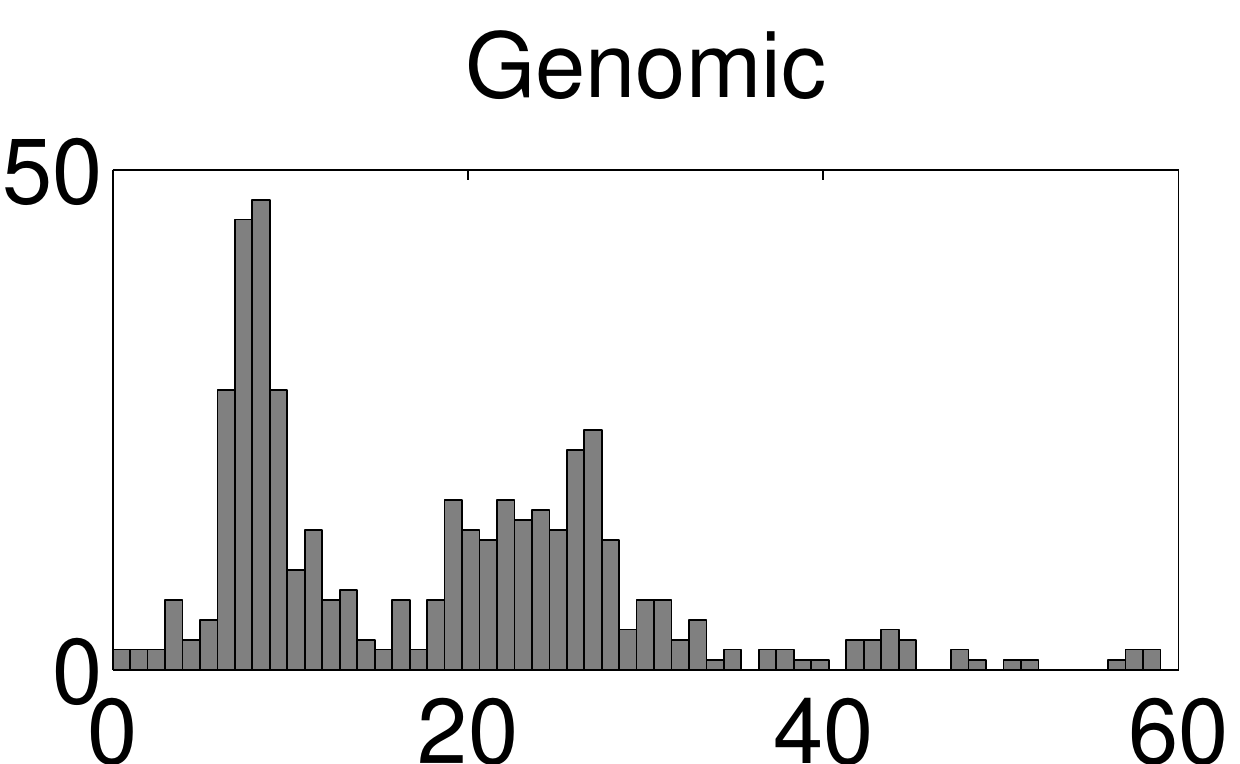} \quad
\includegraphics[scale=0.3]{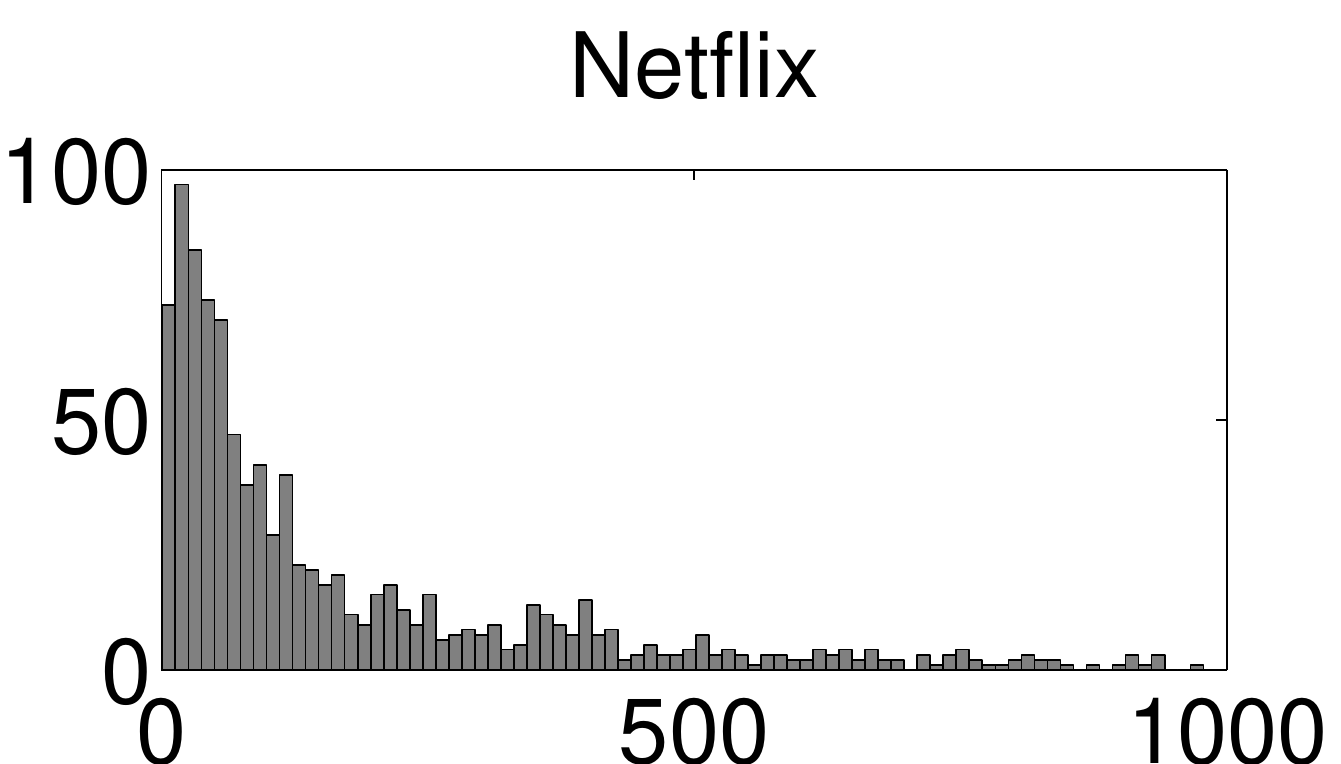} \\ \quad \\
\includegraphics[scale=0.3]{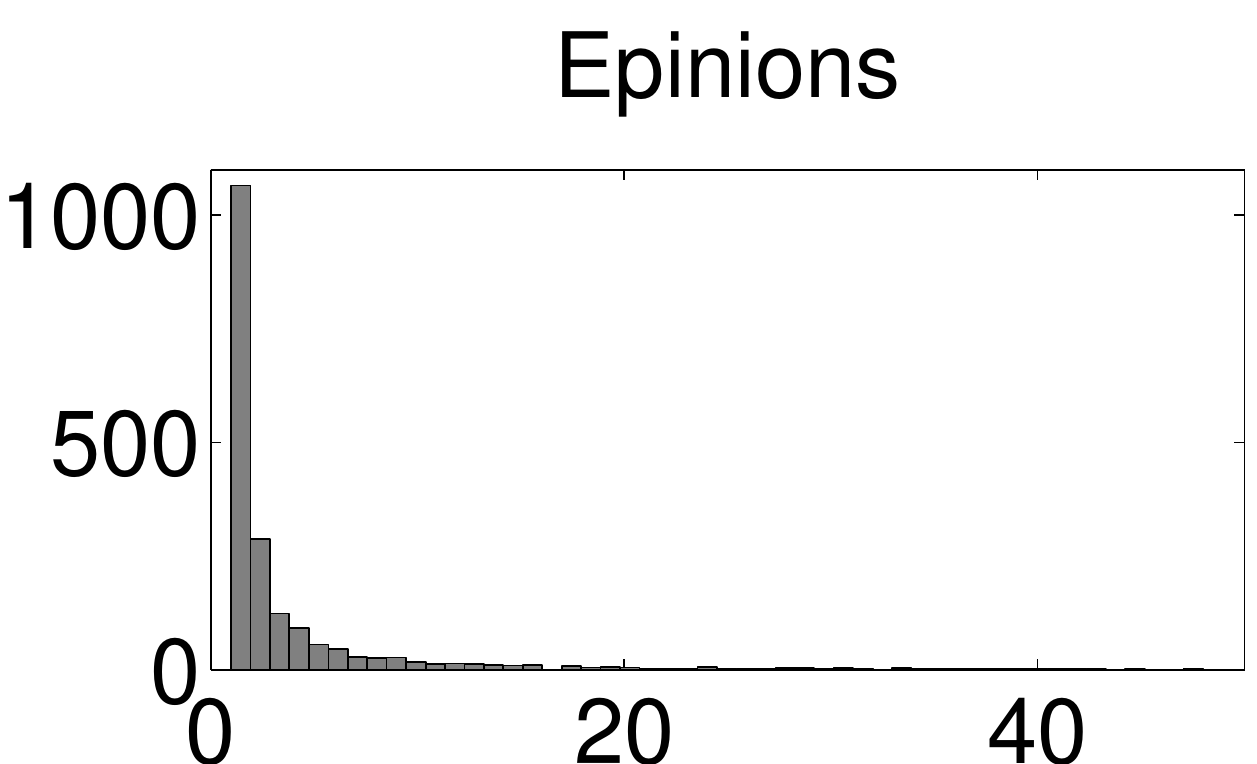} \qquad
\includegraphics[scale=0.3]{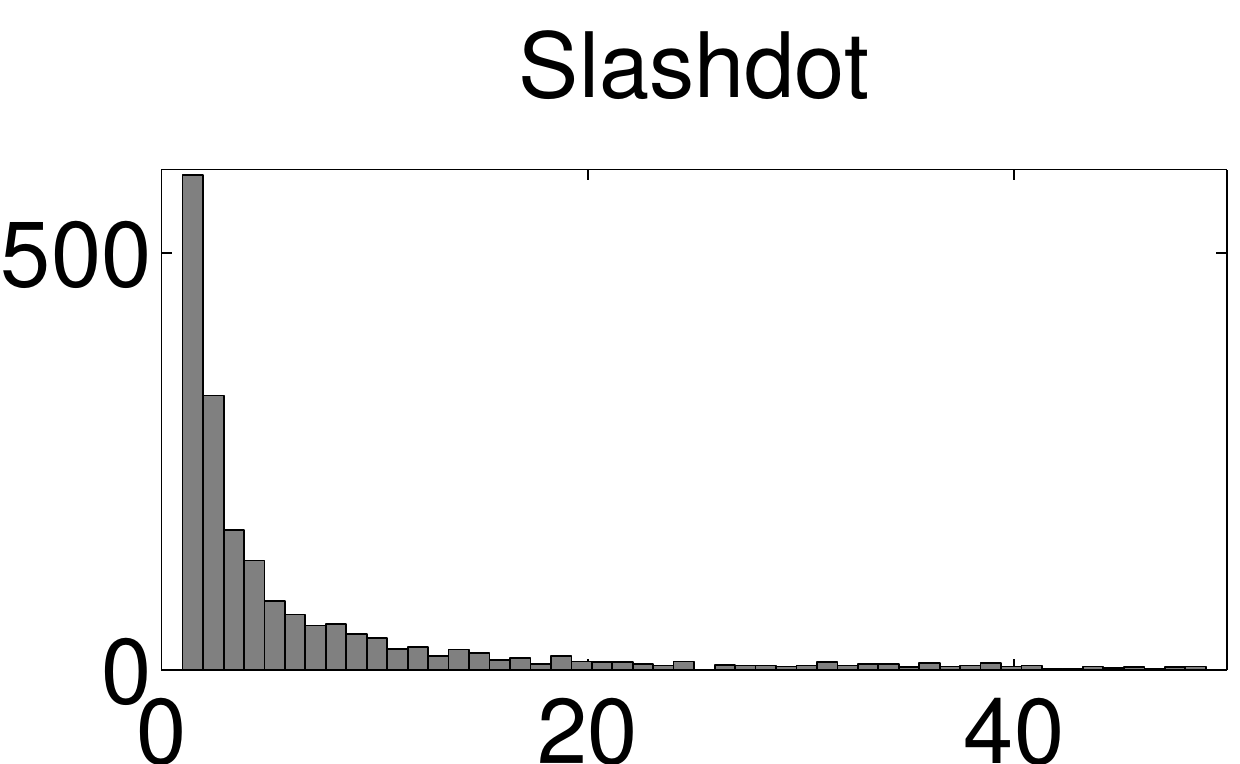} \\ \quad \\
\includegraphics[scale=0.3]{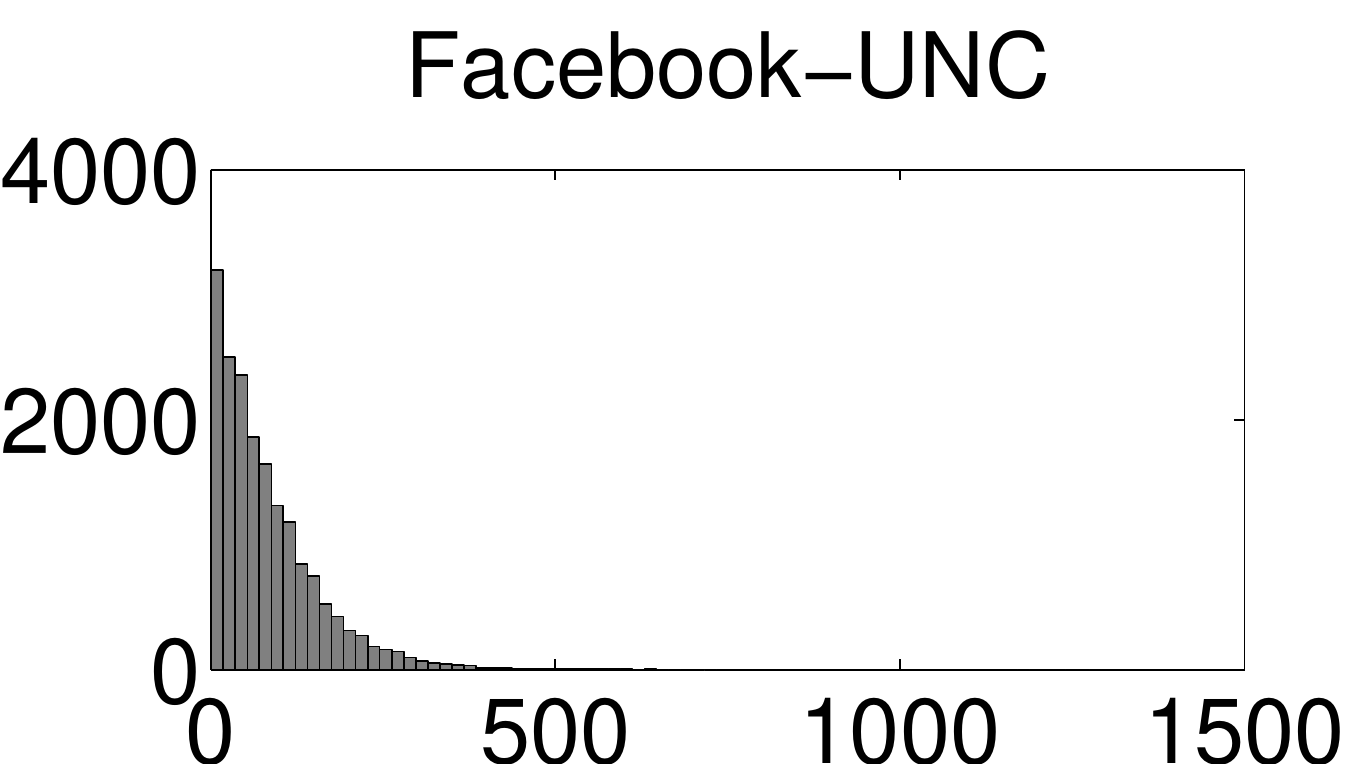} \quad
\includegraphics[scale=0.3]{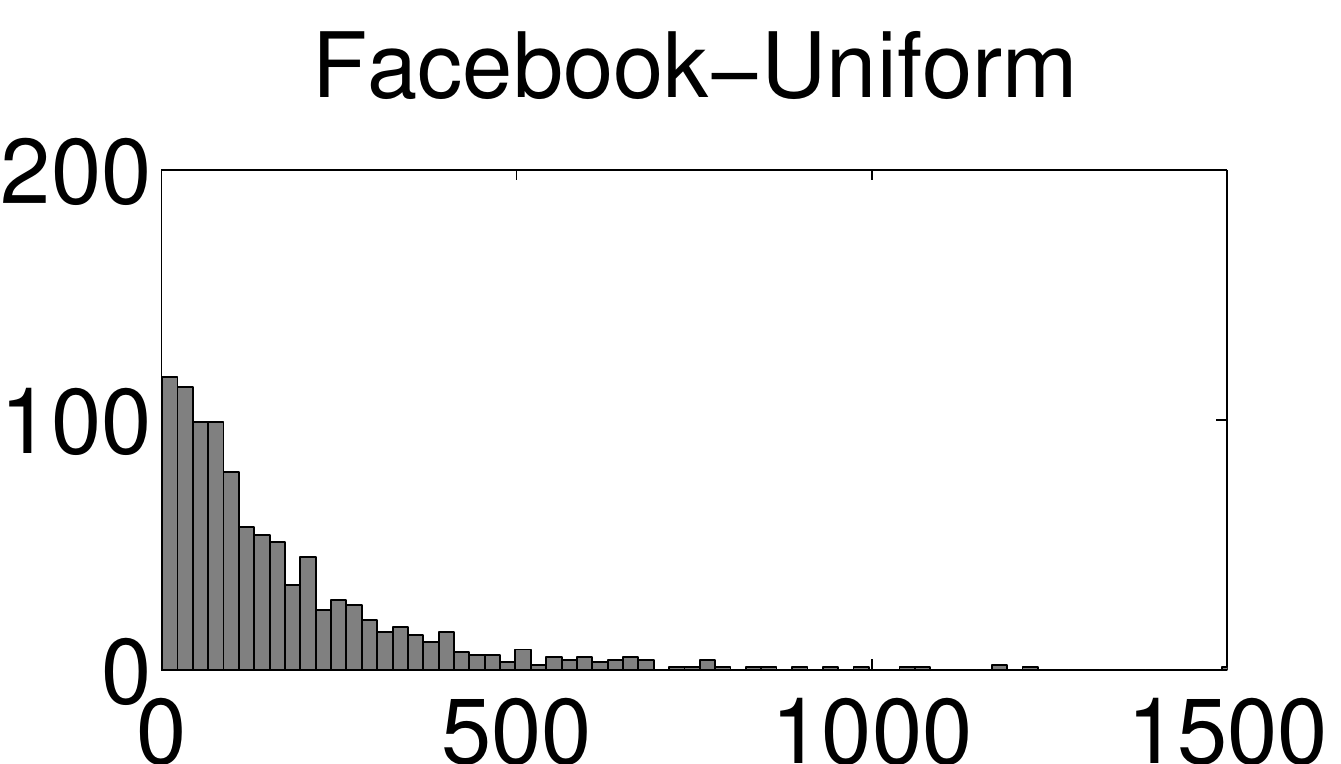}
\caption{Data sets used in experiments, along with histograms of differences from reference $R$.
These data sets have varying characteristics.}
\label{tbl:data}
\end{figure}

\subsection{Data Sets}

We analyze several different data sets with varying characteristics to test our approach.
For each data set, Figure \ref{tbl:data} shows the number of strings $g$, string length $n$, 
maximum difference $d$ from the reference $R$ across strings (where ``difference'' is 
defined as the number of entries in which the string differs from $R$), 
and the number of unique colors $c$ present in the database.

The Genomic database consists of 457 human mitochondrial sequences 
downloadable from GenBank\footnote{\url{http://www.ncbi.nlm.nih.gov/Genbank/index.html}}.  
We use the Revised Cambridge Reference Sequence (rCRS), of length 16,586 bp 
as the reference string $R$.  Figure \ref{tbl:data} shows the distribution of 
sequence differences from $R$, which reveals that the differences from $R$
are relatively few and are concentrated at several different modes.  In this data,
there are four colors, namely the nucleotides A, C, T, and G.

Our movie-rating database is taken from the Netflix Prize
data\footnote{\url{http://www.netflixprize.com}}, 
which consists of 100 million movie ratings and 480,189 different Netflix 
users.  In our experiments, we use a representative subset of 1,000 randomly selected 
users.  Each user has an associated string over 17,770 movies, where each position 
$i$ stores the rating (from 1 to 5) given by the user for movie $i$.  An entry of 0
signifies that the user has not rated that movie.  Thus, there are six different unique
colors in this database (0-5).  Our reference string $R$ consists of all zeros, 
representing the case where no movies are rated.  According to 
Figure \ref{tbl:data}, the majority of users rate less than 300 movies.  This 
sparsity allows the group testing attacks to be very efficient, as we will see in 
the experiments.
 
We also analyze online social networks such as Epinions, Slashdot, and Facebook. 
Available from the SNAP Library\footnote{\url{http://snap.stanford.edu/data/index.html#signnets}},
Epinions and Slashdot are ``signed'' networks, where positive and 
negative links appear in the network's adjacency matrix \cite{Leskovec2010}.  The 
Epinions network is the site's ``Web of Trust'' where users specify the other users that they 
trust or distrust.  Similarly, in the Slashdot network, users can specify both ``friends'' and ``foes''.  
Hence, in both these databases, there are three unique colors: 0 (no
link), 1 (good link), and -1 (bad link).  In our experiments for both Epinions and Slashdot, 
we select a random subset of 2,000 users and utilize the corresponding rows in the adjacency matrix 
as our database.  
We also simulate a single large-scale group testing attack on the 
entire Slashdot-All adjaency matrix with 82,144 users. 
 
The two Facebook data sets that we analyze are Facebook-Uniform and Facebook-UNC.
Facebook-Uniform, provided by the authors of \cite{facebook}, is an unbiased sample
of 957K unique users obtained by performing Metropolis-Hastings random walks over the Facebook network. 
Each user is associated with a (sparse) binary vector of size 72 million which denotes adjacencies.  
We restrict ourselves to a random subset of 1,000 users in Facebook-Uniform.  Meanwhile, Facebook-UNC 
is a self-contained Facebook network of approximately 18,000 students at the 
University of North Carolina at Chapel Hill \cite{facebook2}.  

For all the social network data sets, we use a reference
string $R$ of all zeros.
Figure \ref{tbl:data} shows that these networks are also sparse, which is often the case in many real-world settings. 

\subsection{Experiments}

Our experimental approach is based on the analysis in Section \ref{sec:sparsity}.  Similar to 
randomly selecting $\frac{t}{d}$ rows from $2t$ rows (for each column in the nonadaptive 
group matrix $M$), we stochastically set each entry in $M$ to 1 with 
probability $p = \frac{1}{2d}$.  This procedure enables us to add additional tests 
to $M$ until the string is cloned or until a cutoff of $100,000 * c$ tests is reached, where
$c$ is the number of unique colors in the database.  We initialize with the same 
random seed for each string, ensuring that the same exact tests are performed on 
each string.  This scheme allows us to determine the actual number
of tests needed to clone the strings.  

\begin{table}
\caption{Theoretical number of tests needed to clone entire database (a) by baseline method 
(b) by nonadaptive Mastermind attack.}
\begin{center}
\begin{tabular}{l|c|c}
 & Baseline & Mastermind \\
\hline
Genomic & 49,704 & 76,752 \\
Netflix & 88,500 & 536,760\\
Epinions & 263,654 & 66,176 \\
Slashdot & 164,288 & 46,872  \\
Slashdot (All) & 164,288 & 58,208 \\
Facebook-UNC & 18,163 & 227,700 \\
Facebook-Uniform & 72,261,577 & 190,432 \\
\end{tabular}
\end{center}
\label{tbl:theoretical}
\end{table}

Before delving into the experimental results, we report in Table \ref{tbl:theoretical} the 
theoretical number of tests needed to clone the entire database with high probability, 
using the nonadaptive Mastermind technique.  These numbers are based on $n$, $g$, $d$, $c$, and the 
bound in Theorem \ref{thm:sparse}.  Table \ref{tbl:theoretical} also shows the number of tests needed 
by a baseline technique to exactly clone the entire database.  This baseline technique
generates tests based on the reference $R$.  For each entry $j$ within $R$, 
and for each color offset $l$, a test is created where the entry $j$ in $R$ is replaced 
with its color offset $l$, namely $(R[j] + l)\mod c$.  Thus, the baseline method needs 
$(c-1) * n$ tests to recover the database.  Interestingly,
the baseline technique can beat the theoretical bound (with $d$) when $n$ is small, 
as is the case for the Genomic, Netflix, and Facebook-UNC data.

Fortunately, our Mastermind attack can take advantage of the sparsity in the 
data to improve its efficiency.  Since each string's distance from $R$ is usually much smaller 
than $d$, it is empirically advantageous to use a target $\hat{d}$ that is much smaller 
than $d$.  For instance, the Netflix data has a maximum difference $d=1988$, but the mean 
difference from $R$ is $d_{\text{mean}}=202$ and the median is 
$d_{\text{median}}=92$.  Thus, there are different possible choices for $\hat{d}$.

\begin{figure*}
\centering
\includegraphics[scale=0.32]{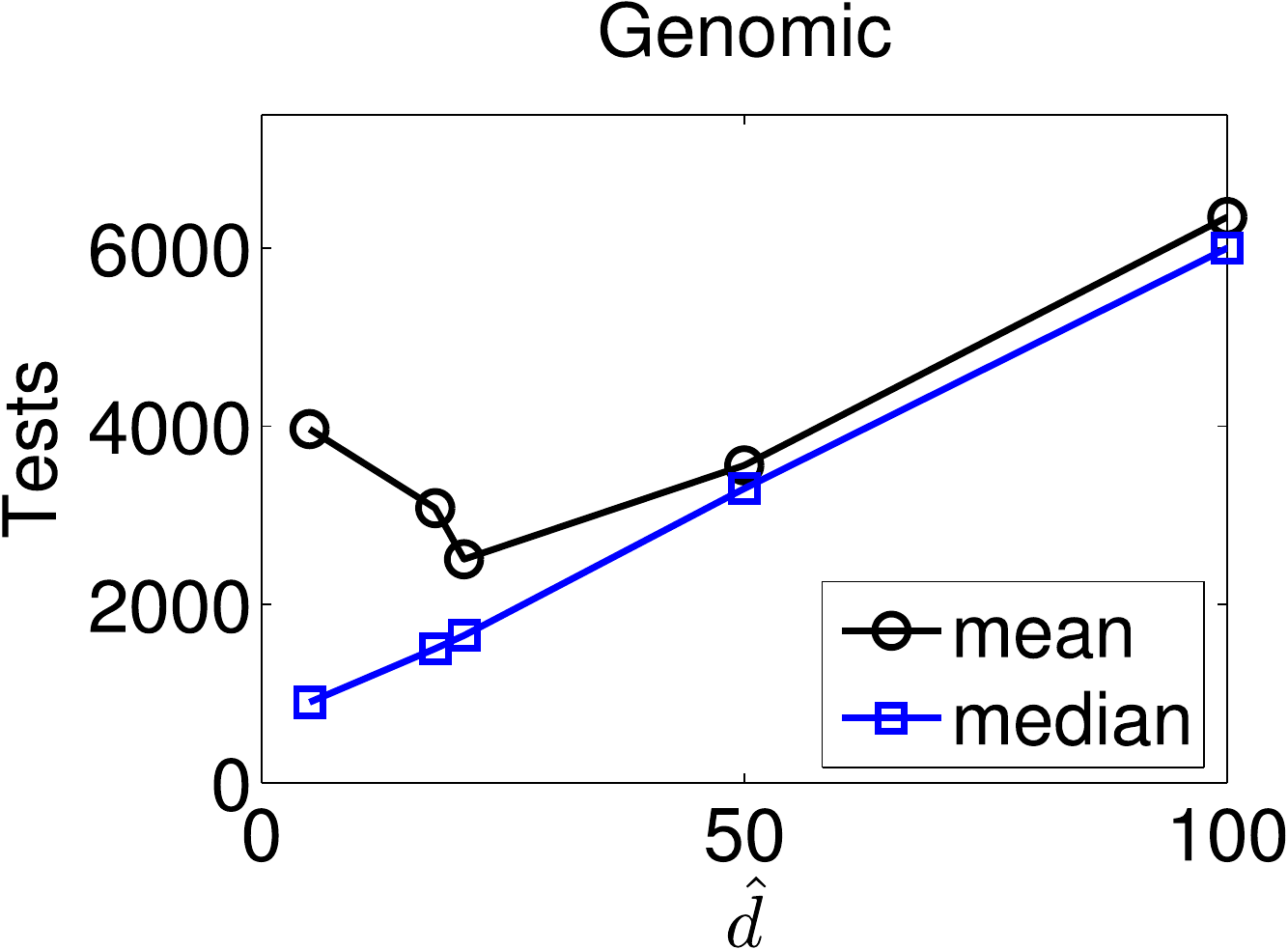} \qquad
\includegraphics[scale=0.32]{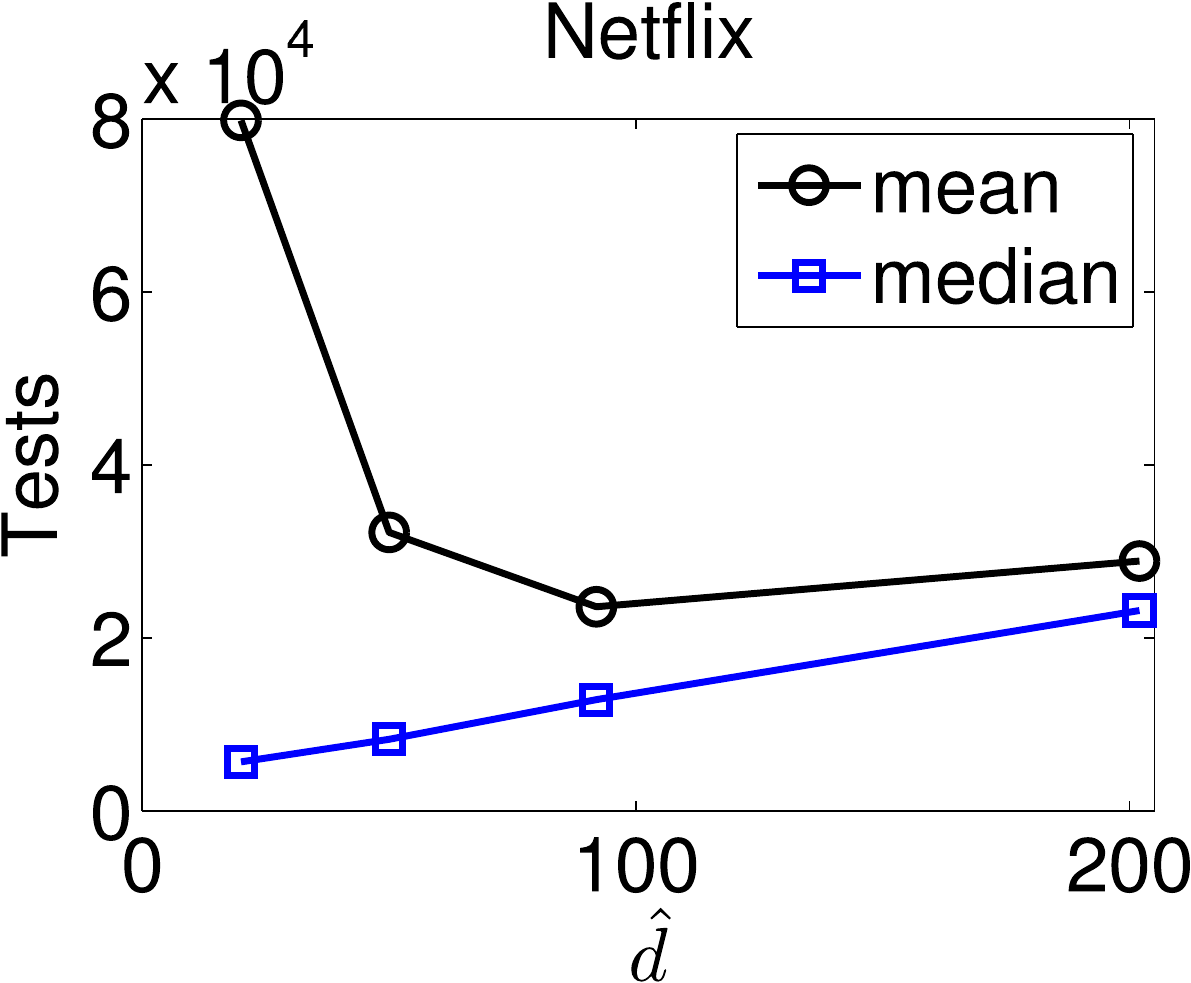} \\ \quad \\
\includegraphics[scale=0.32]{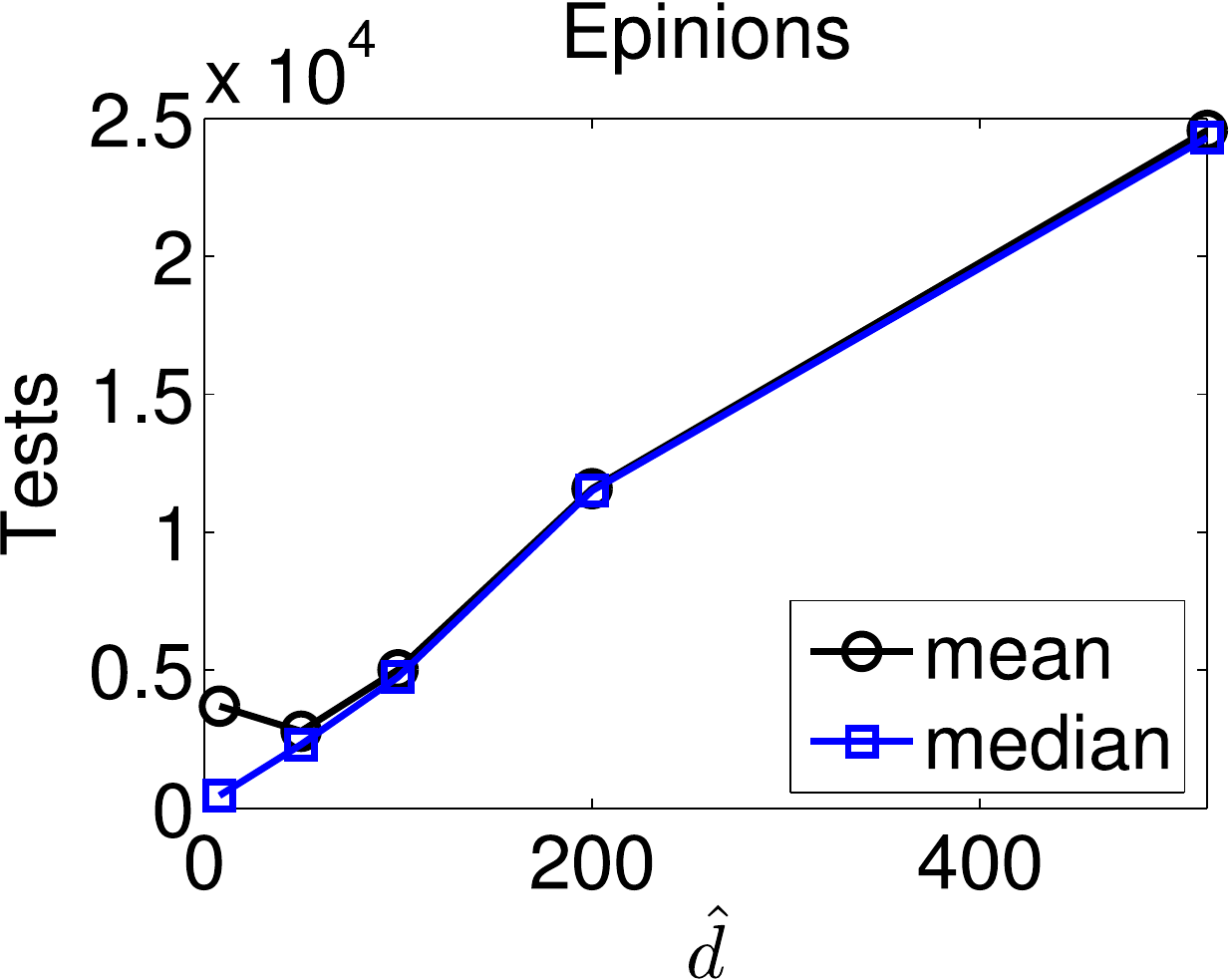} \quad
\includegraphics[scale=0.32]{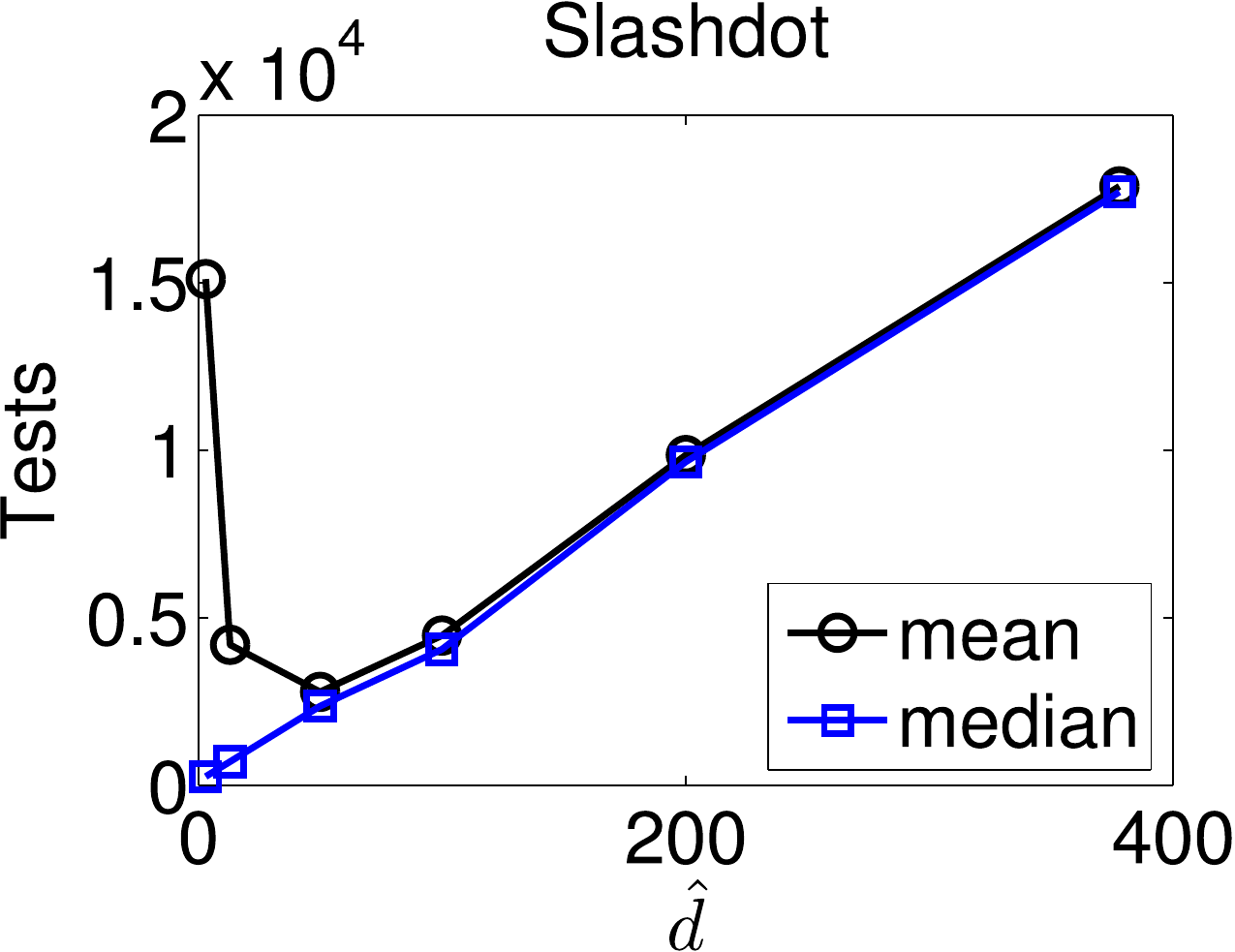} \quad
\includegraphics[scale=0.32]{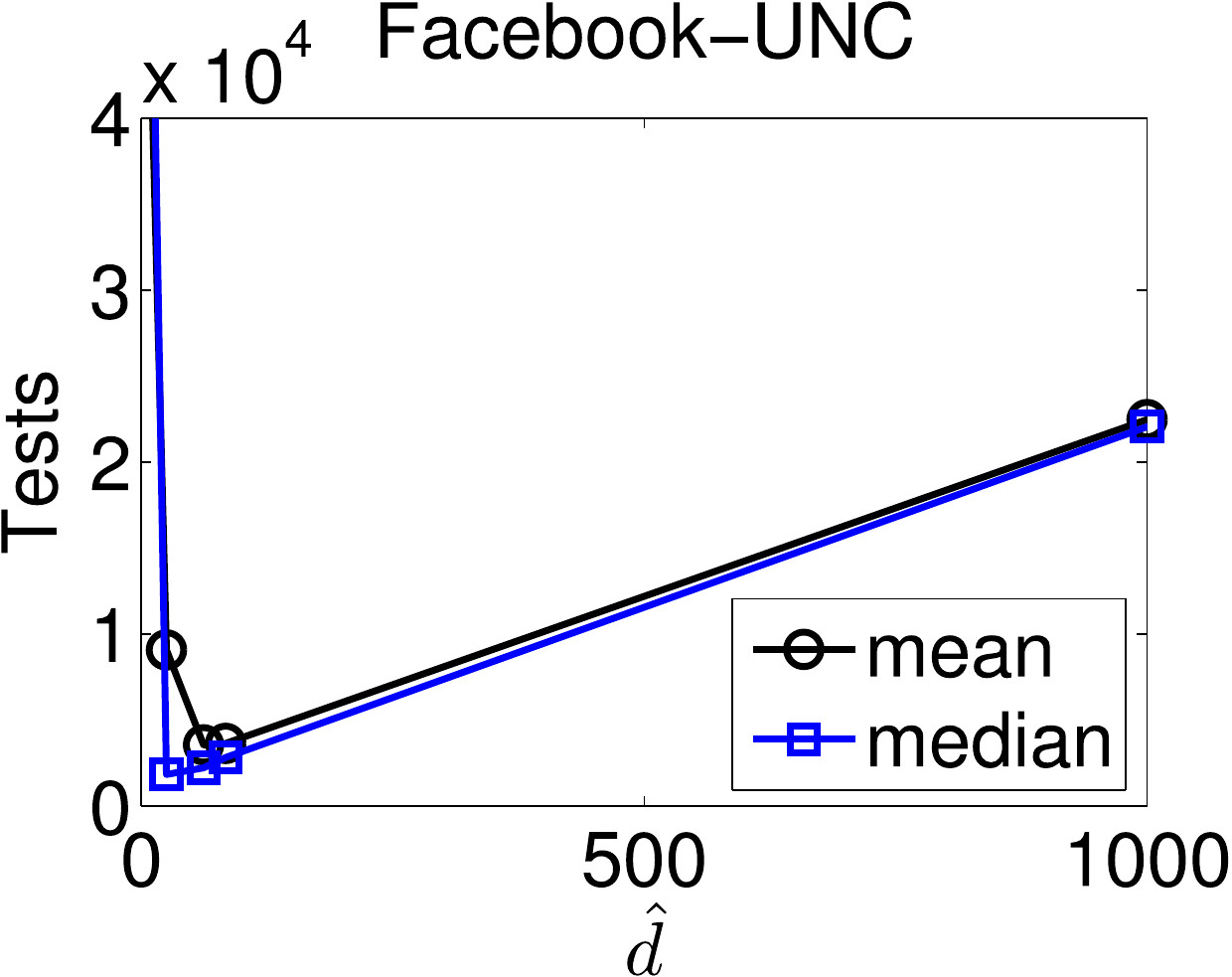} 
\caption{Mean and median number of tests required until string is cloned (averaged across all strings in database), for various settings of target distance $\hat{d}$. Typically, it is advantageous to 
set $\hat{d}$ to be much less than $d$, since most of the vectors are sparse and are close to the reference R.}
\label{fig:target}
\end{figure*}

\begin{figure*}
\centering
\includegraphics[scale=0.32]{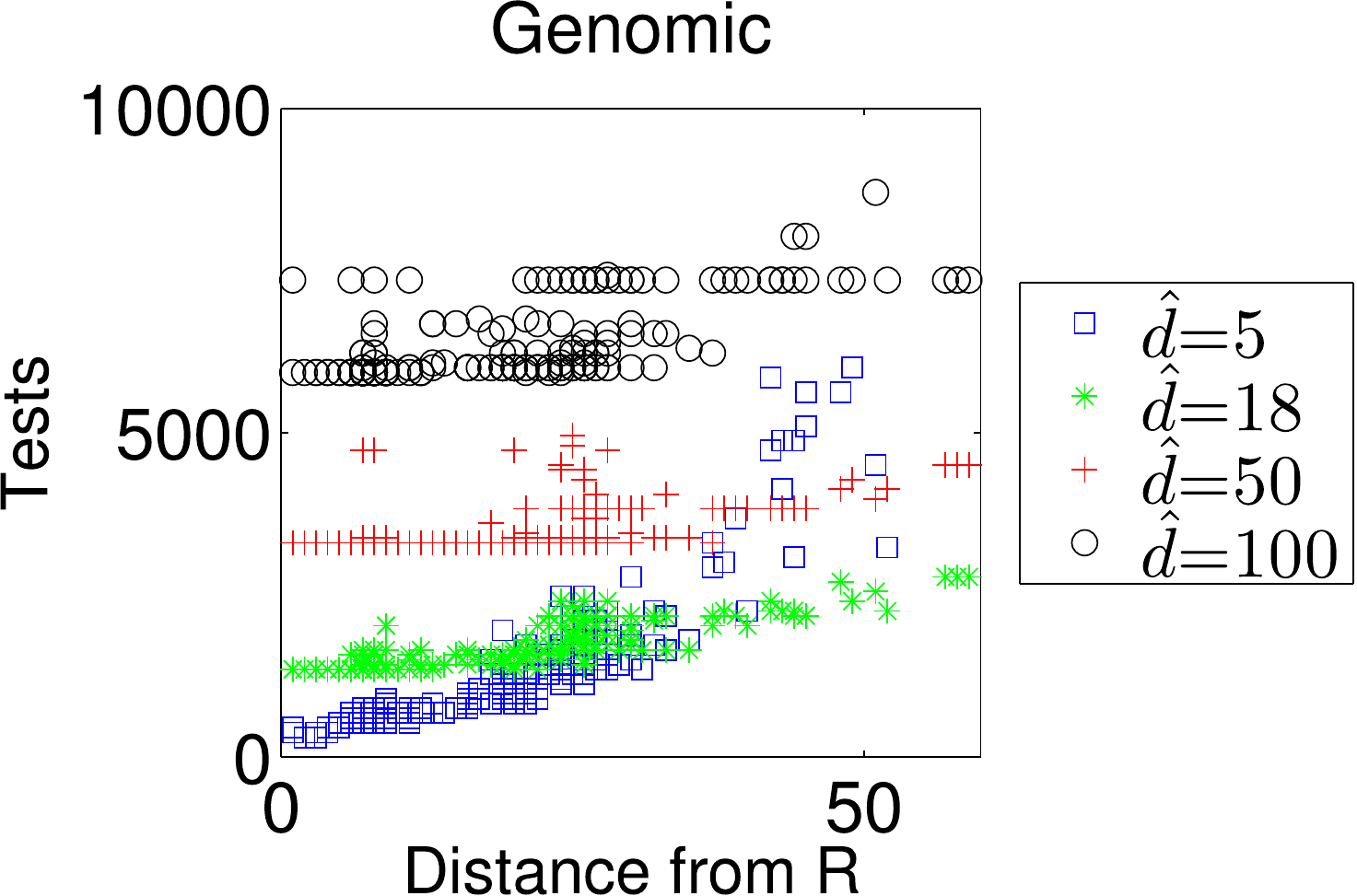} \qquad
\includegraphics[scale=0.32]{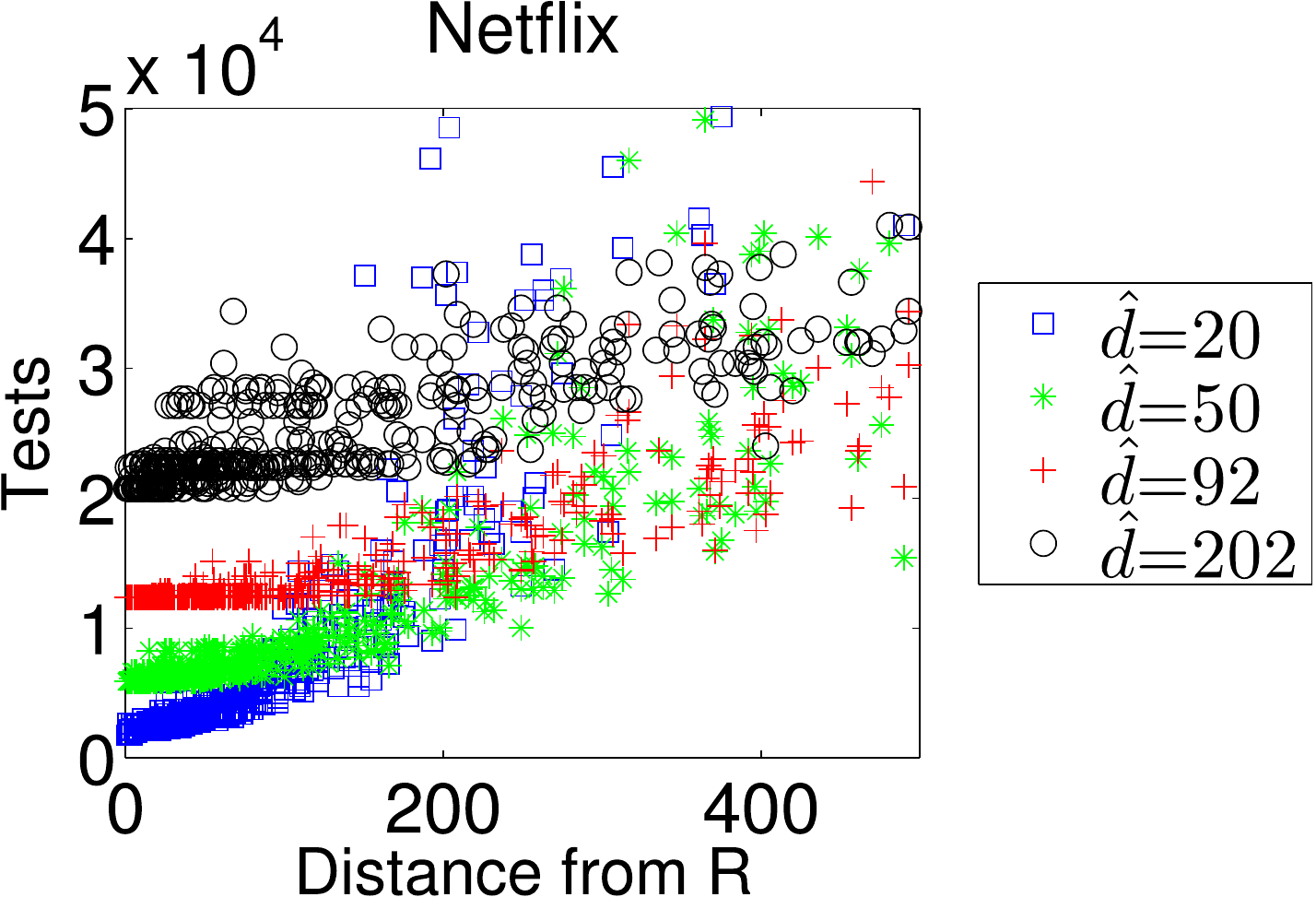} \\ \quad \\
\includegraphics[scale=0.32]{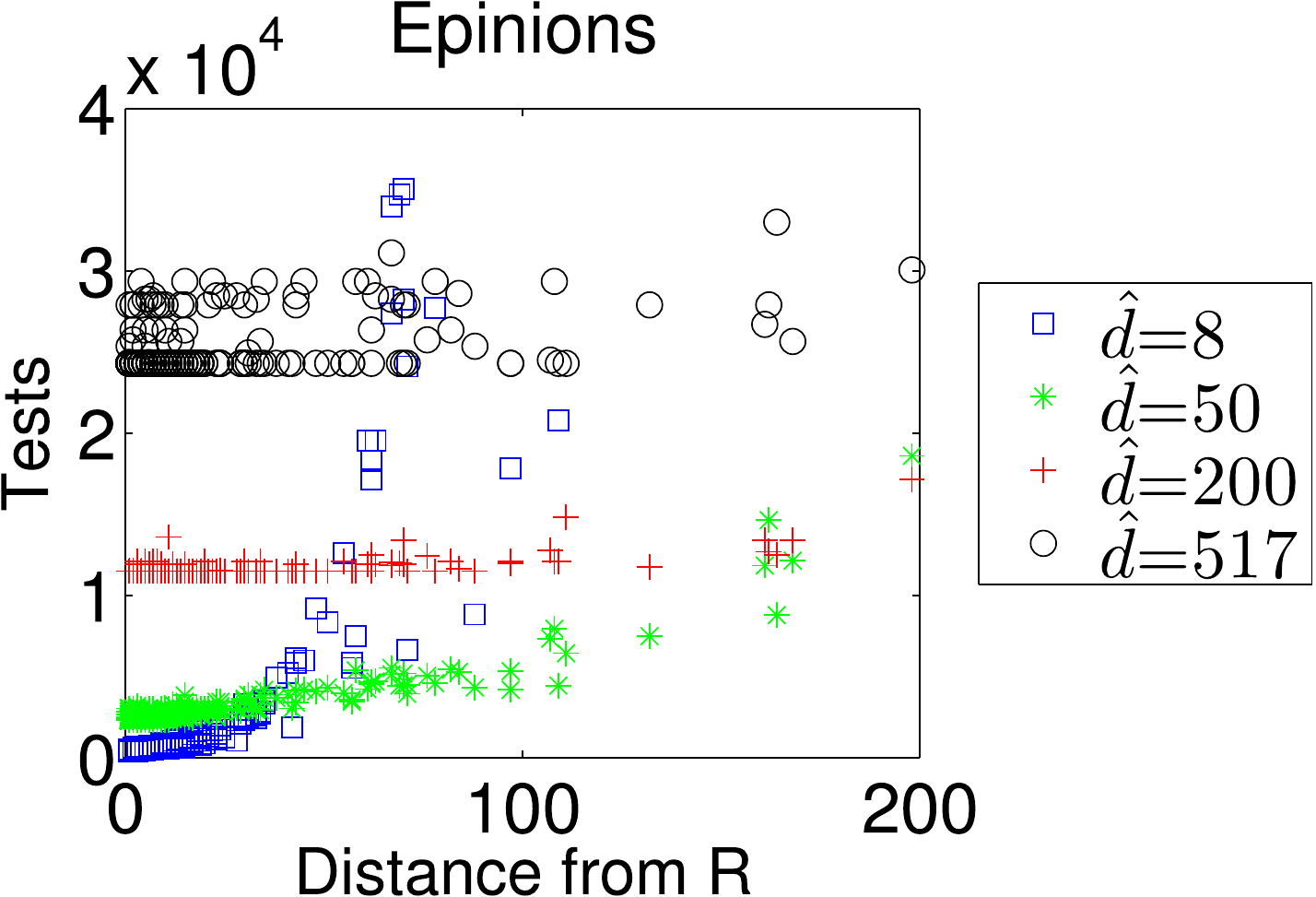} \quad
\includegraphics[scale=0.32]{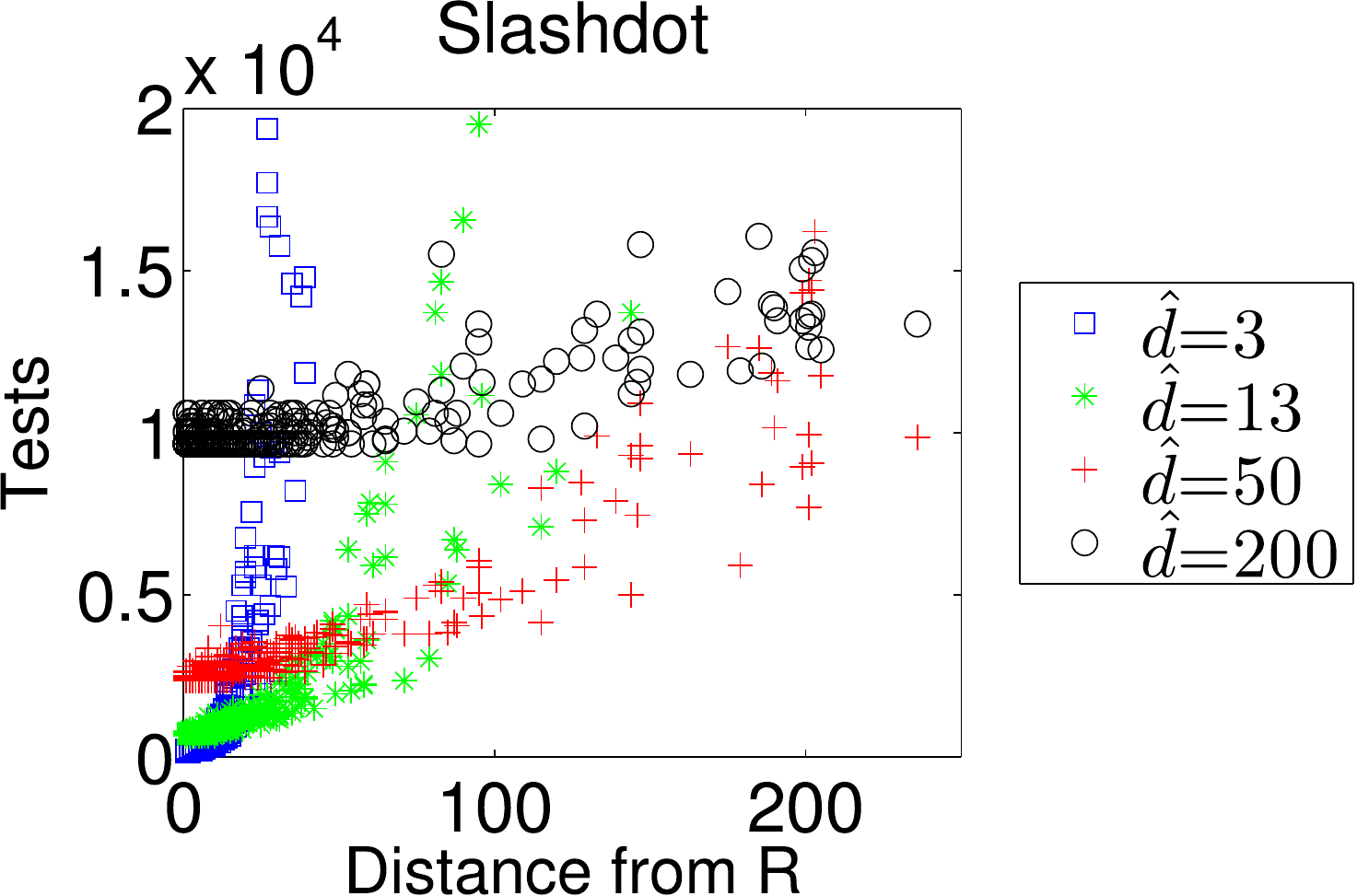} \quad
\includegraphics[scale=0.32]{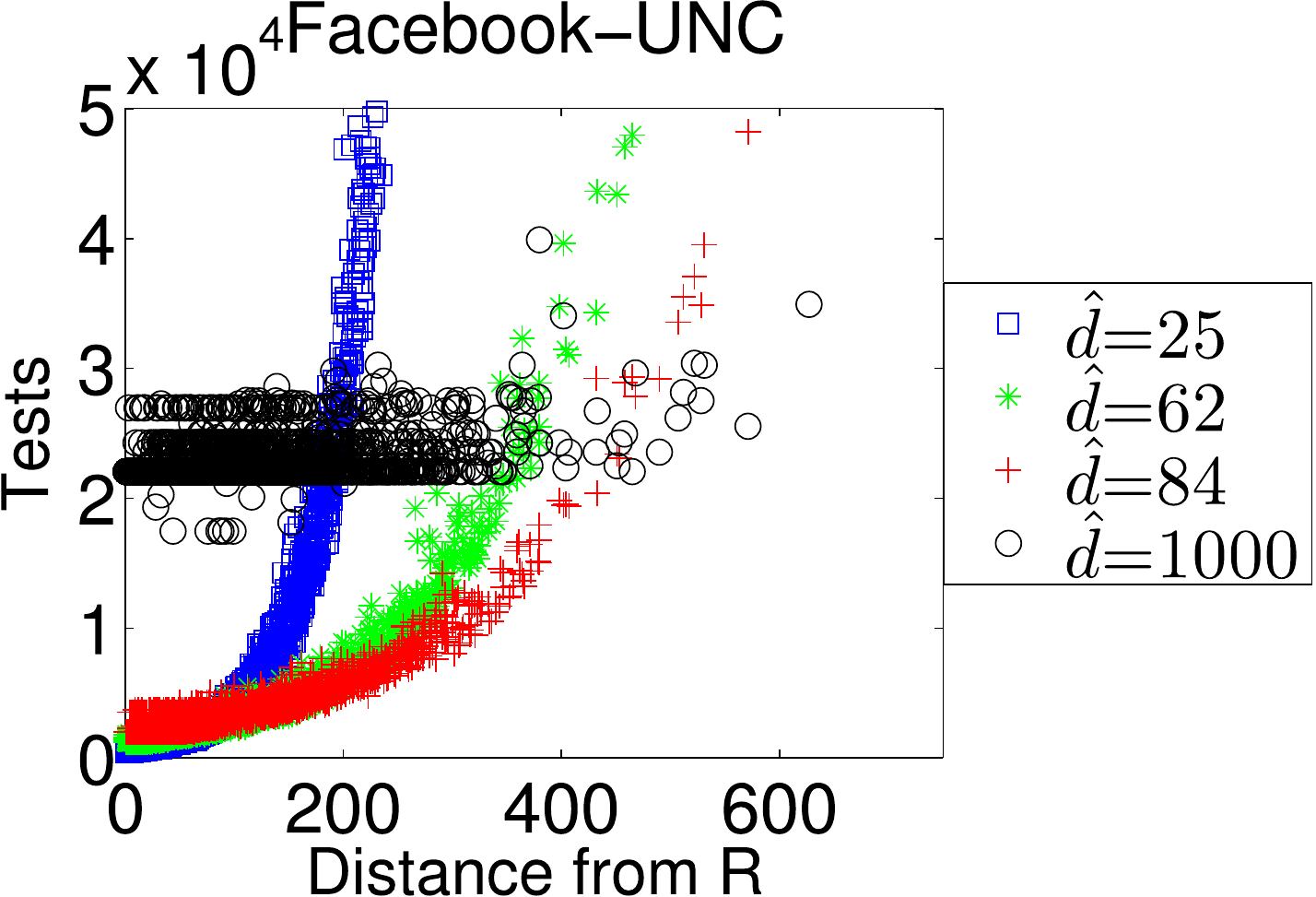} 
\caption{Number of tests required to clone each string, ordered by the 
string's distance from R.  Each string is represented by a dot. While the number of tests increases
rapidly for small $\hat{d}$ when the vector is far from $R$, note that many vectors are close to $R$, allowing for a majority of the database to be captured quickly.}
\label{fig:dot}
\end{figure*}

For each data set (excluding Slashdot-All and Facebook-Uniform due to their large scale), Figure \ref{fig:target} shows the number of tests needed to exactly clone a string (averaged across all strings in the database), as a function of $\hat{d}$.  
In a few instances, when the strings are very far from $R$, the algorithm may reach the 
cutoff value, causing the mean to be undervalued; thus, we also plot 
the median number of tests since the median is more robust to outliers.  Generally, we see that 
mean and median number of required tests decreases as $\hat{d}$ is decreased from $d$.  For 
instance, for the Slashdot database, the mean/median number of tests is 18,000 
if $\hat{d} = d = 378$, but if $\hat{d} = 50$, the mean/median number of tests is 3,000 and if
$\hat{d} = d_{\text{mean}} = 13$, the median number of tests requred is 700.  Sometimes, the 
mean number of tests increases if $\hat{d}$ is too small though.  If 
$\hat{d} = d_{\text{mean}} = 13$, the mean number of tests required is around 4,000.  Thus, there is 
a tradeoff.  If $\hat{d}$ is too small, it would take longer to exactly clone a string that is 
far away from $R$.  If $\hat{d}$ is too large (e.g. $\hat{d}=d$), then many inefficient tests would 
be performed on strings that are close to $R$.   
We assume that a good estimate for $\hat{d}$ (such as the median 
distance from $R$) can be obtained a priori, e.g. through scientific knowledge in the case of 
the Genomic database, or publicly available information in the cases of Netflix, Epinions, 
Slashdot, and Facebook.

We also investigate the
relationship between the number of required tests and the vector's
distance from $R$.  In Figure \ref{fig:dot}, we observe that the number of 
tests required to clone a vector is very low (and nearly constant) 
when the vector's distance from R is itself low and close to $\hat{d}$.  As the vector's distance increases, the number 
of required tests grows more quickly due to the mismatch between the 
distance and $\hat{d}$.  For each data set,
we display different scatter plots for different settings of $\hat{d}$.  For instance,
for the Slashdot data, the number of tests is relatively constant across all distances
when the $\hat{d}=200$; however, at this setting, the number of required tests is at least 10,000,
even when the vector is close to the reference $R$.  In contrast, when $\hat{d}=3$, the number of required tests
is only in the hundreds, around the vicinity of $\hat{d}$; however, when the vector's distance from $R$ is significantly
greater (e.g. over 100), the scatter plot increases dramatically.  It is important to note that most vectors are
close to $R$ due to the sparsity of the data, and thus, even when the scatter plot dramatically increases
when the distance from $R$ is great, there are relatively few vectors that fall within this regime.

\begin{figure}
\centering
\includegraphics[scale=0.3]{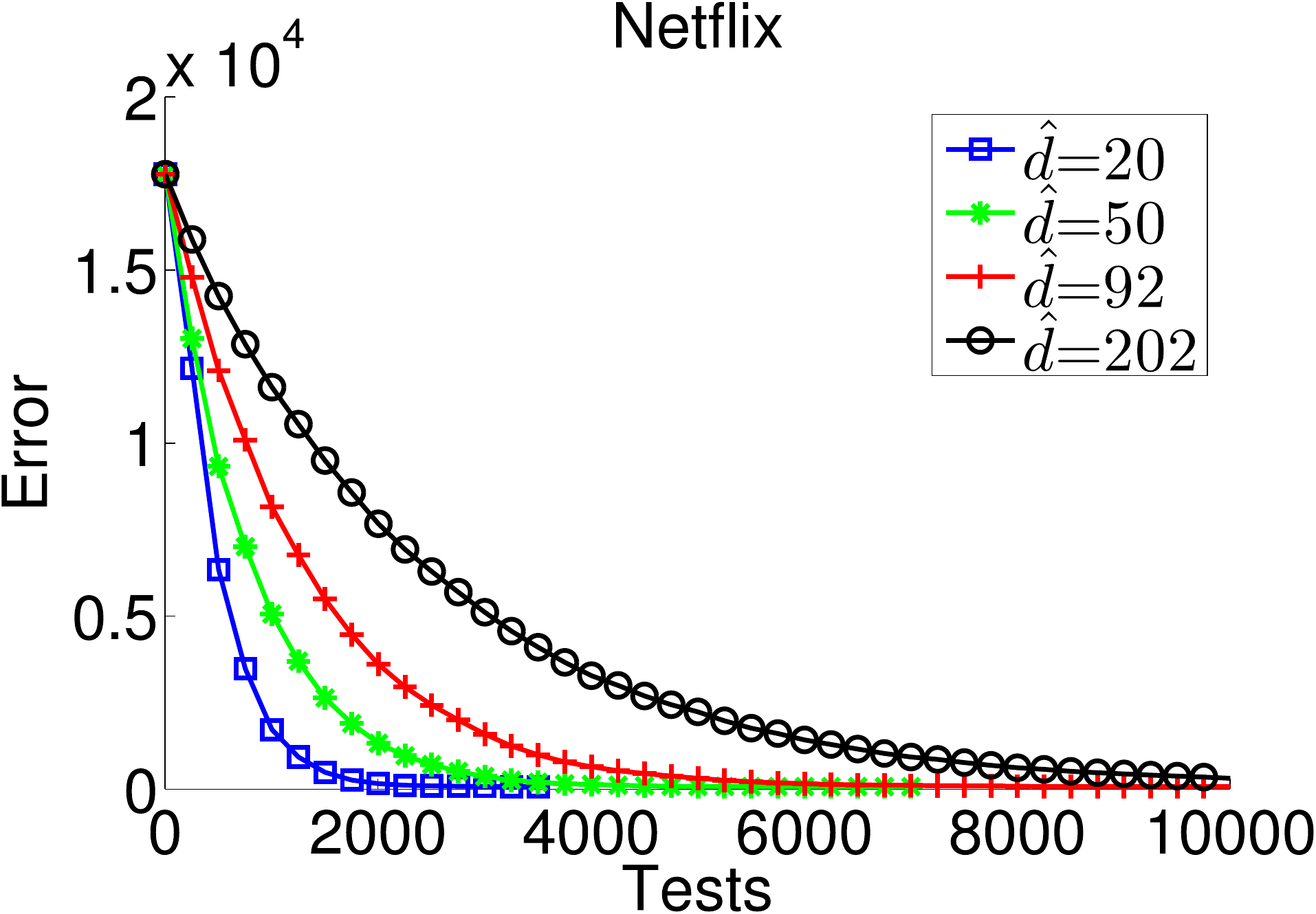}
\caption{Error as a function of the number of tests for a single Netflix user
who has rated 68 movies, for various $\hat{d}$.}
\label{fig:convergence}
\end{figure}

Providing another perspective, Figure \ref{fig:convergence} shows the decrease in error (defined as the number of 
differences between the string and the state of the reconstructed string) as the number of 
tests increases, for a randomly selected Netflix user who has rated 68 movies.  One can 
see that using $\hat{d}=202$ induces a slower rate of convergence than when using smaller 
settings for $\hat{d}$.  The case where $\hat{d}=d=1988$ is not shown since its rate of 
convergence is even slower.

\begin{figure}
\centering
\quad\includegraphics[scale=0.3]{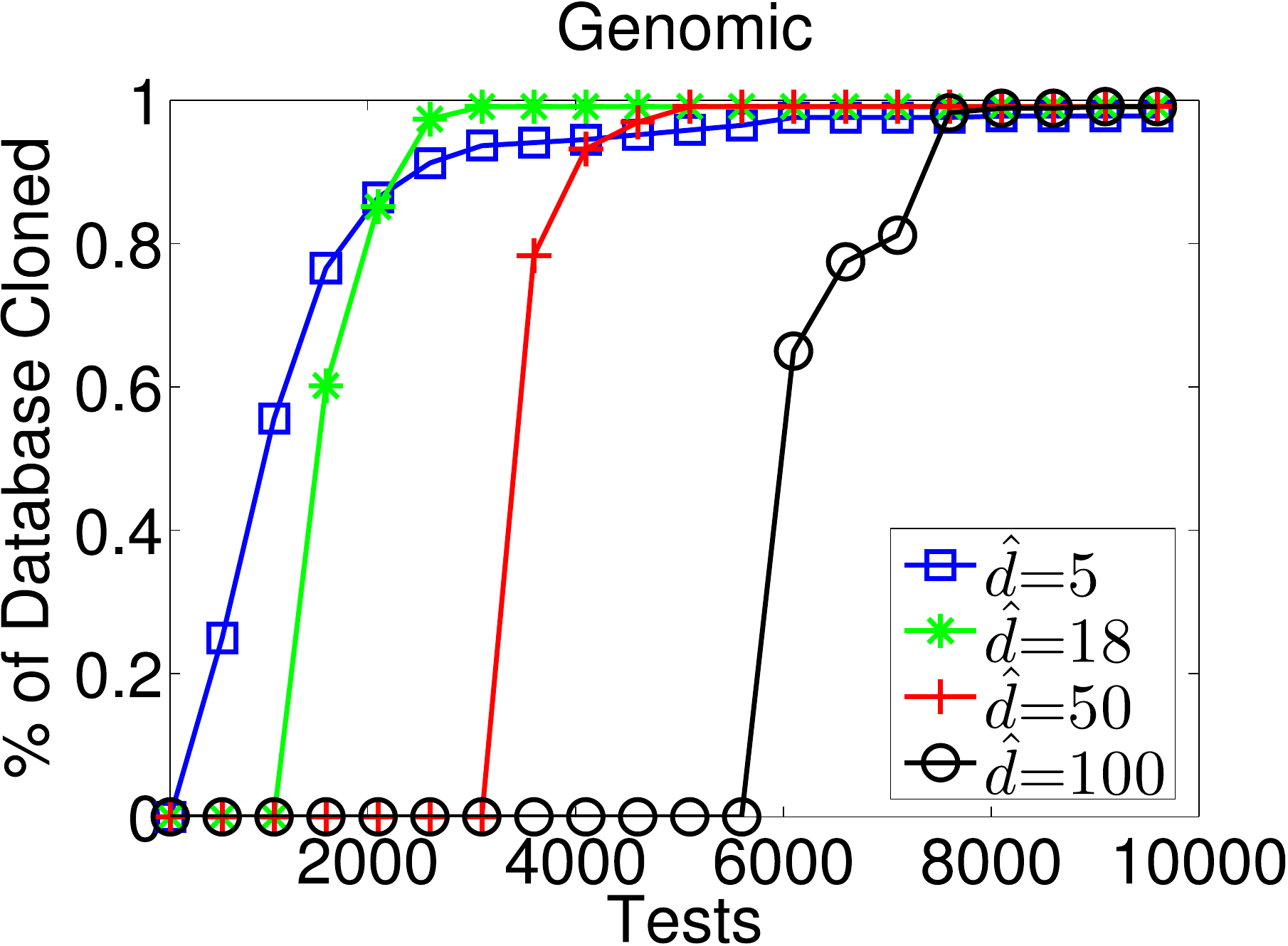} \qquad
\includegraphics[scale=0.3]{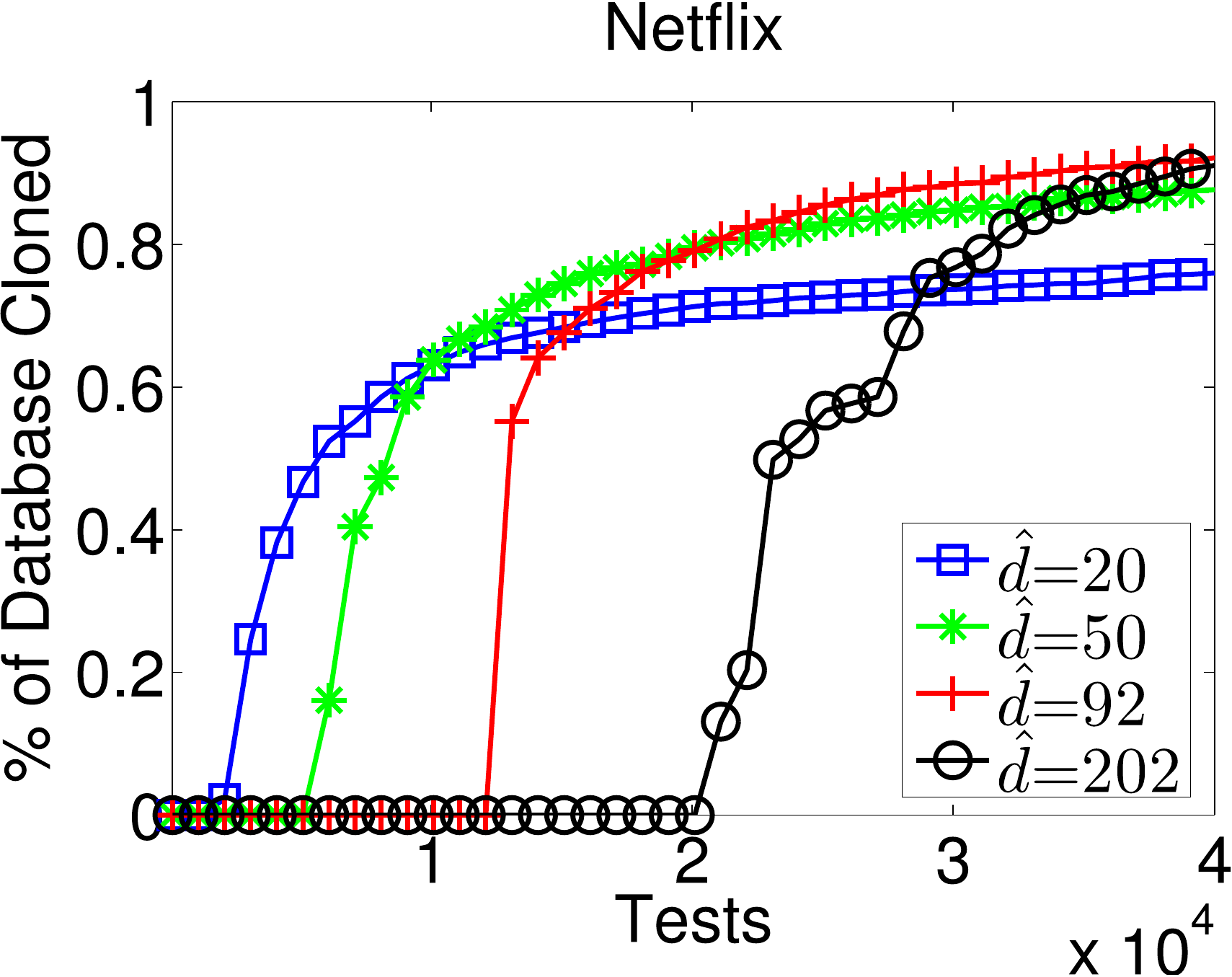} \\\quad \\
\includegraphics[scale=0.3]{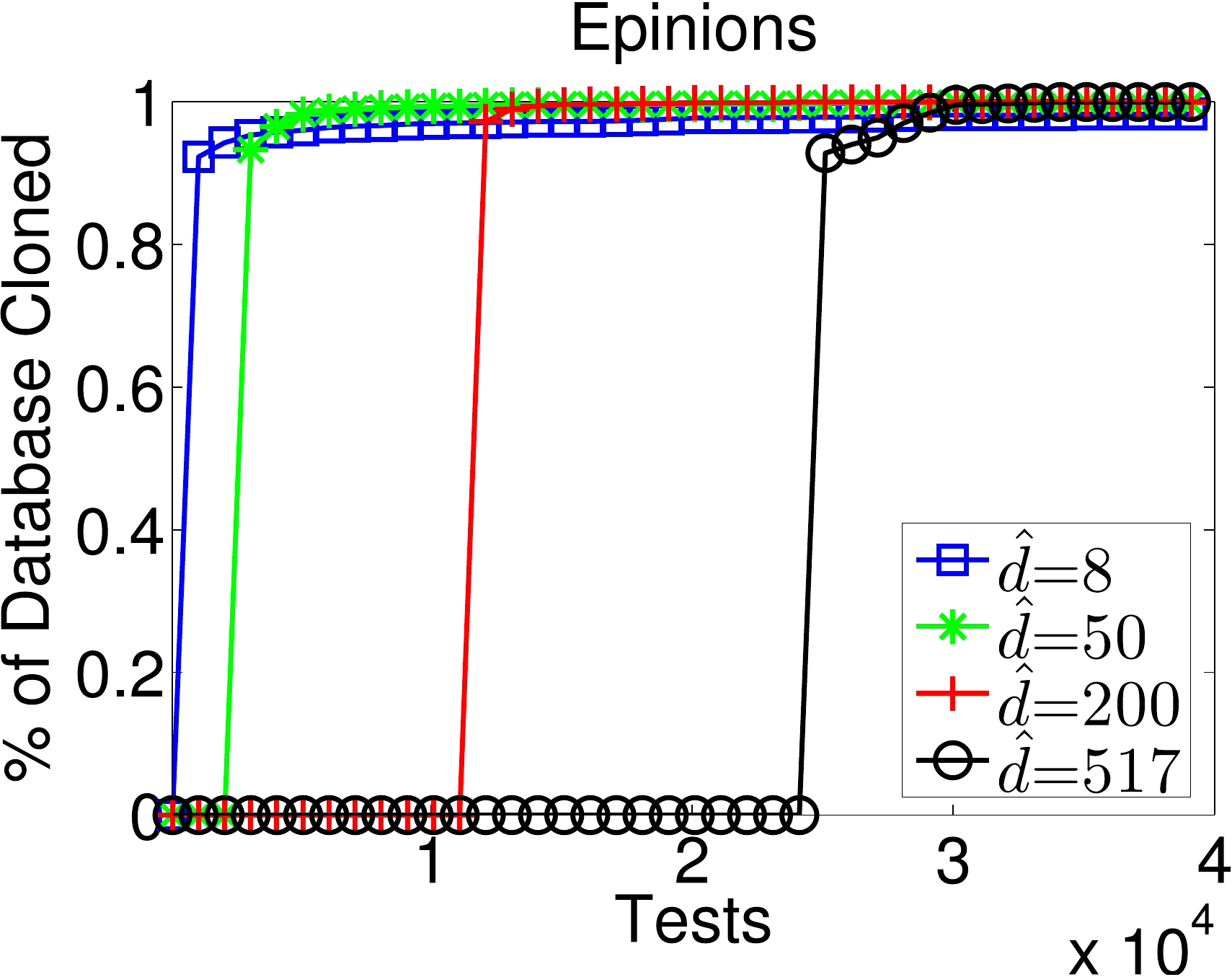} \qquad
\includegraphics[scale=0.3]{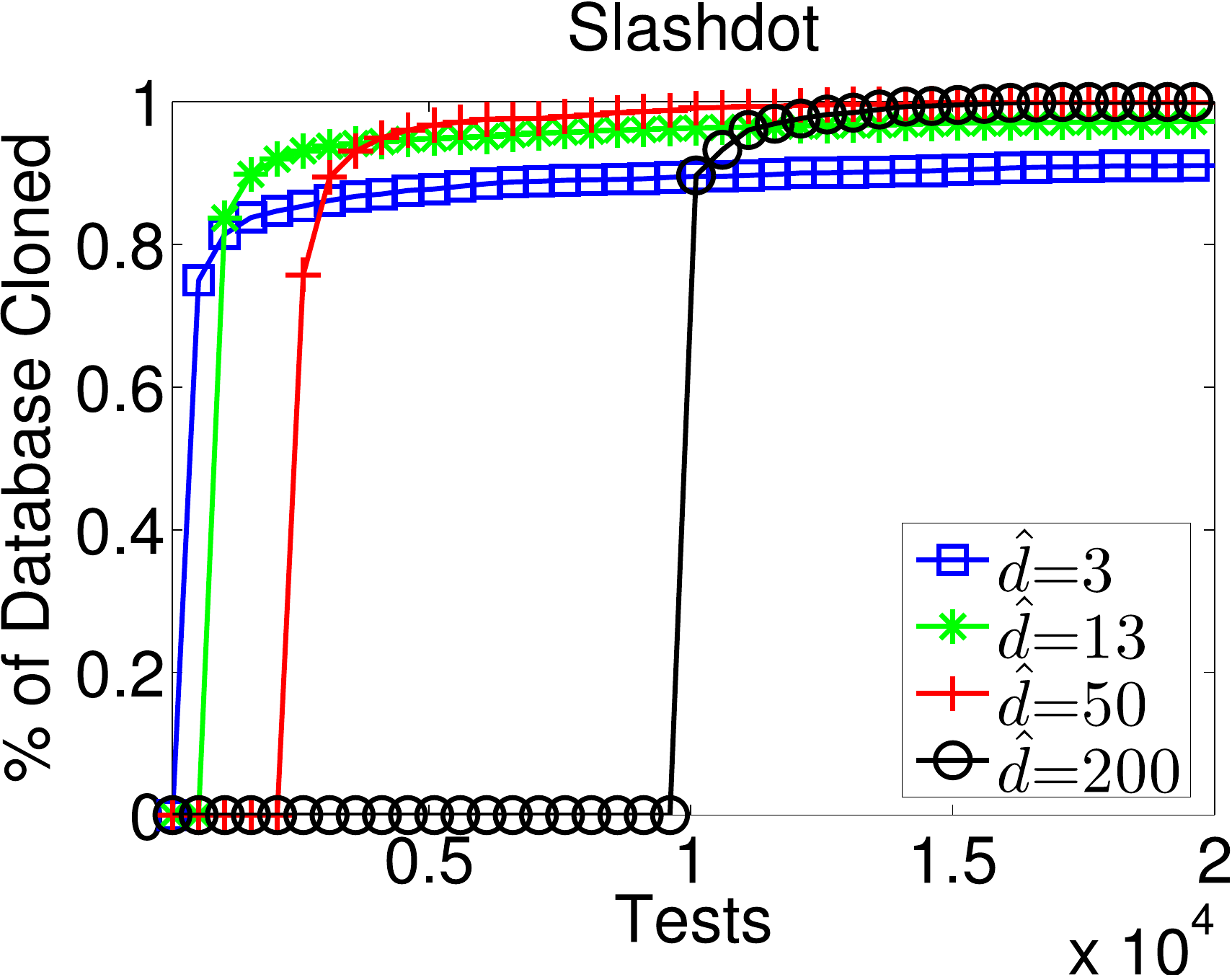} \\\quad \\
\quad\includegraphics[scale=0.3]{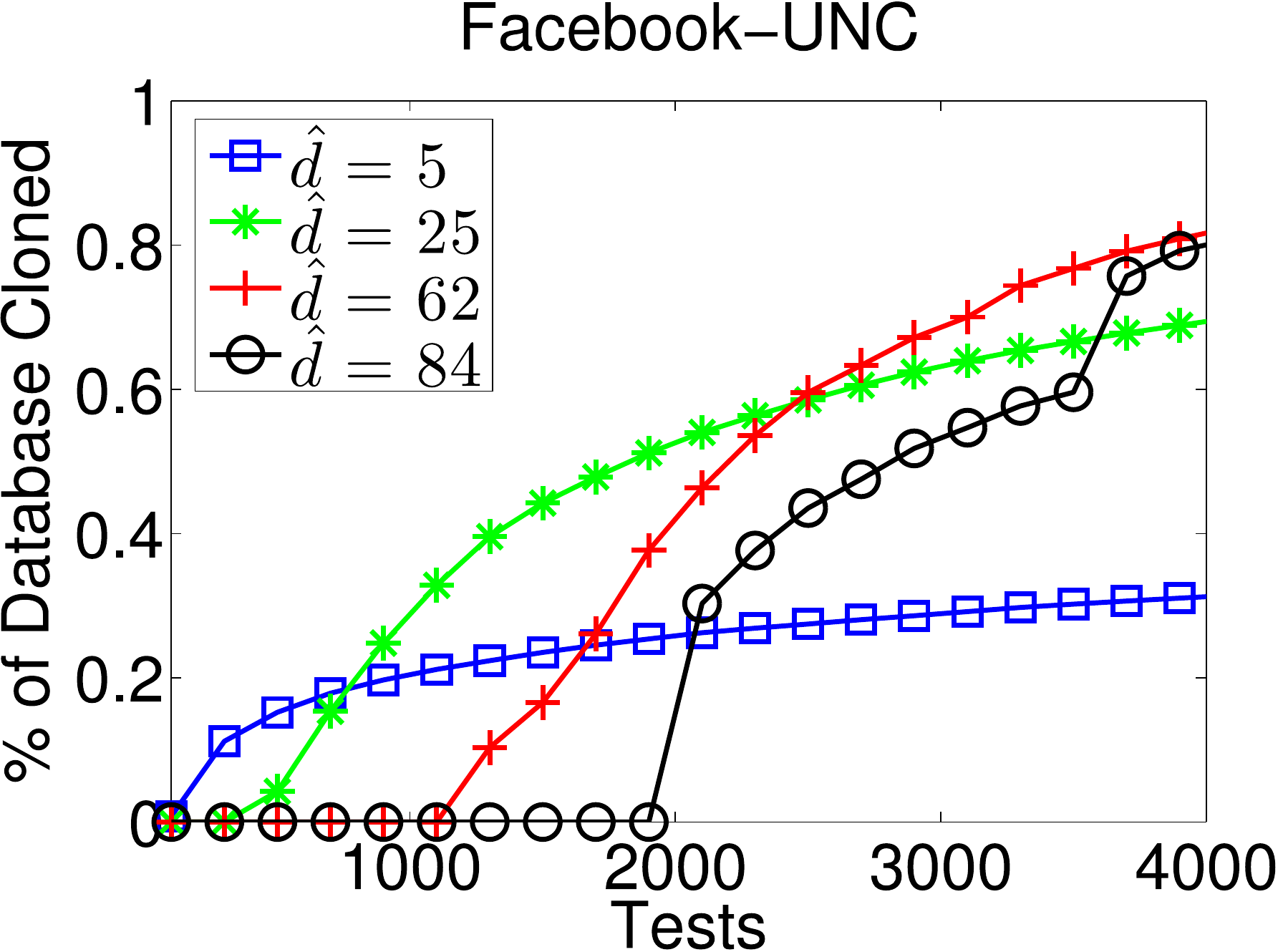} 
\caption{Percentage of strings cloned as a function of the number of tests, for each data set, using various $\hat{d}$.}
\label{fig:percentage}
\end{figure}

\begin{figure}
\centering
\includegraphics[scale=0.3]{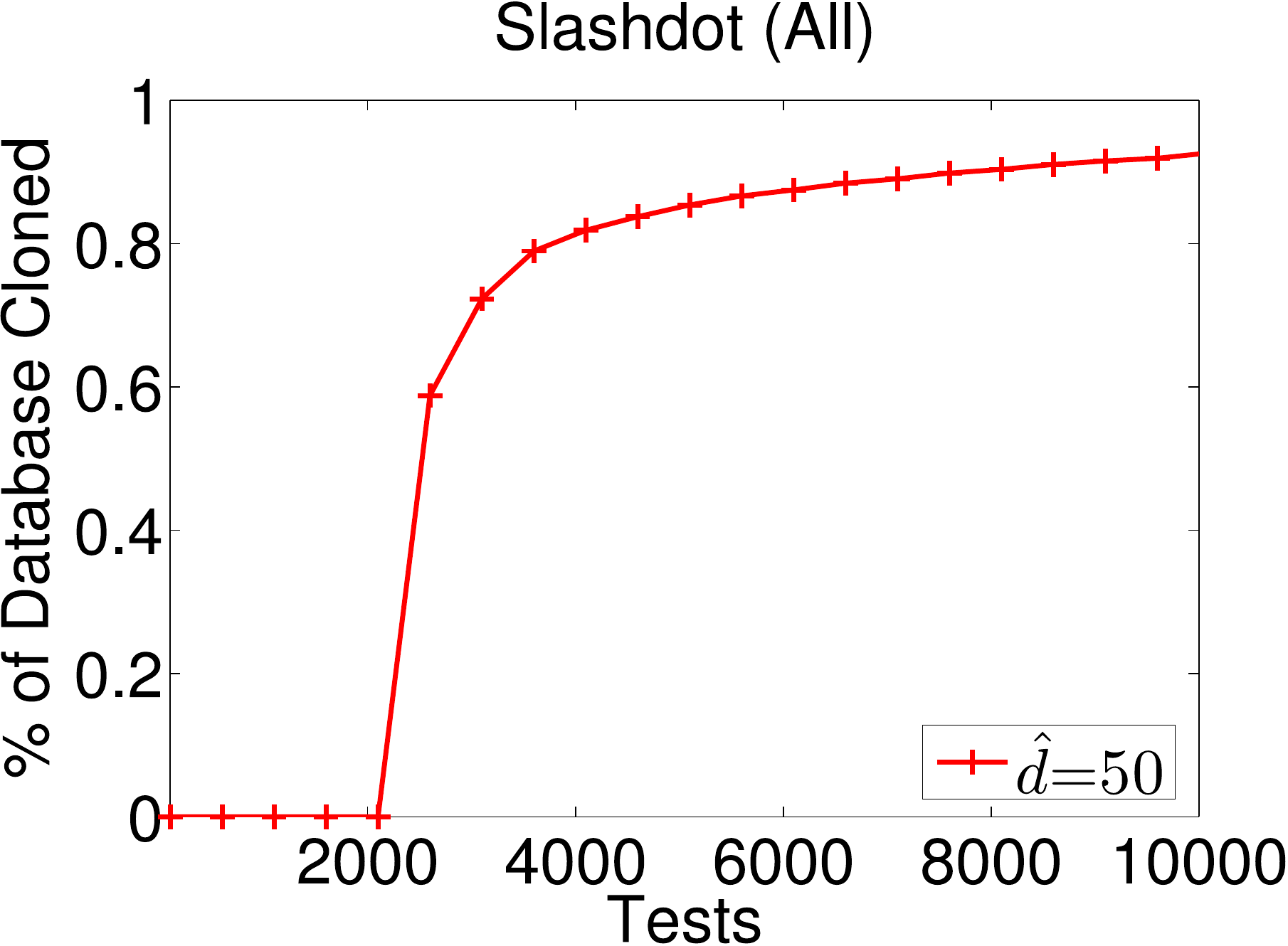} \qquad
\includegraphics[scale=0.29]{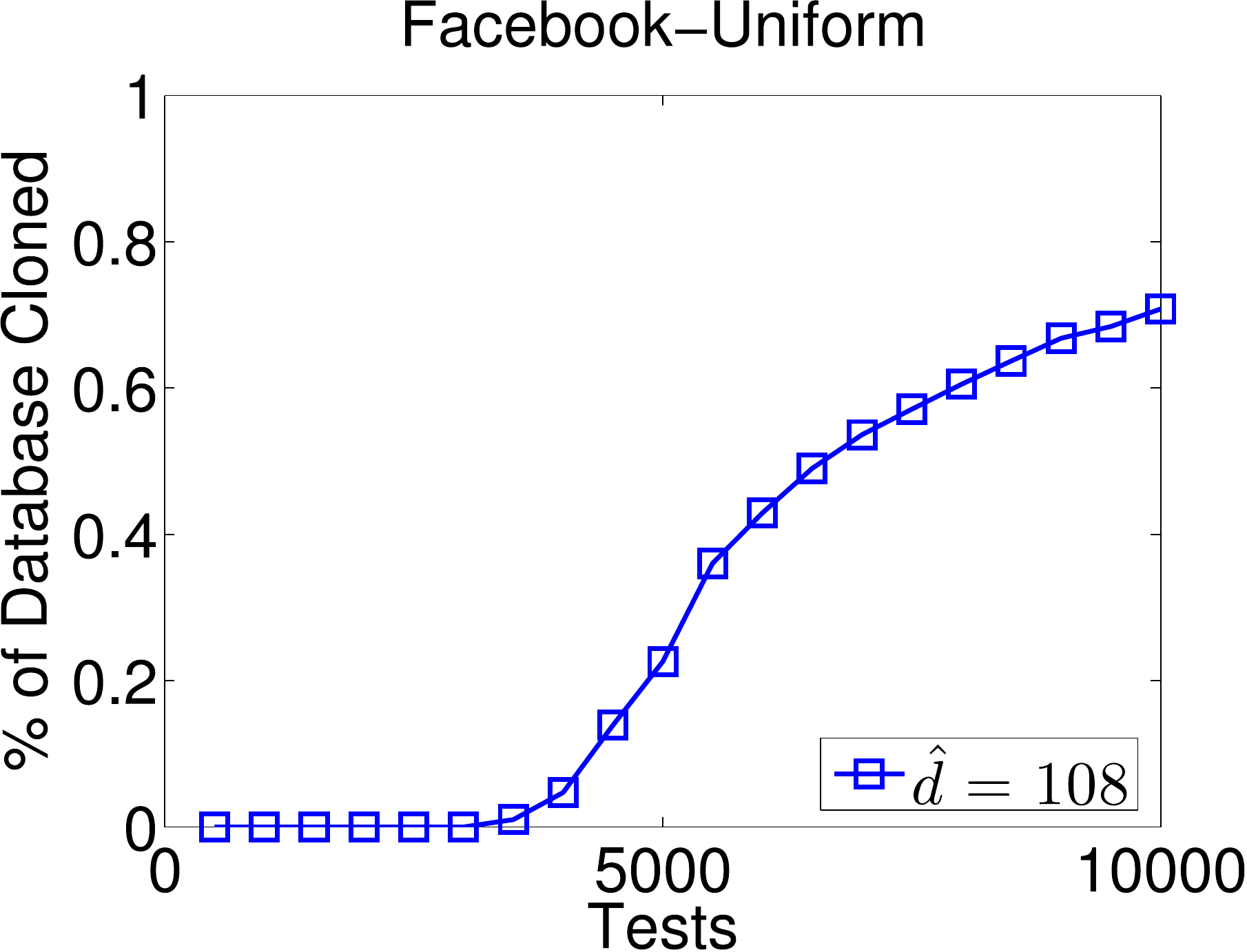}
\caption{Percentage of strings cloned as a function of the number of tests, for the large-scale
data sets.  Slashdot-All has a large number of strings
($g=82,144$) while Facebook-Uniform has large vector length ($n=72,261,577$). }
\label{fig:percentageLargeScale}
\end{figure}

In Figure \ref{fig:percentage}, the percentage of the database cloned by the
nonadaptive Mastermind attack is plotted as a function of the number of tests, 
for various $\hat{d}$.  We highlight some examples which demonstrate the 
efficiency of this attack.  For the Genomic data (using 
$\hat{d} = d_{\text{median}} = 18$), 78\% of the database is successfully recovered 
after 2,000 tests, and over 99\% of the database is recovered after 3,000 tests, which
is significantly less than both the baseline result and the theoretical bound in 
Table~\ref{tbl:theoretical}.
For the Netflix data (using $\hat{d} = d_{\text{median}} = 92$), 63\% of the strings 
are recovered after 10,000 tests.  For the Epinions data (using $\hat{d} = d_{\text{mean}} = 8$), 
68\% of the strings are recovered after only 500 tests.  For the Slashdot data 
(using $\hat{d} = d_{\text{mean}} = 13$) 82\% of the strings are recovered after 
only 1,000 tests.

For Facebook-UNC, we see that the Mastermind 
attack displays different behavior for different choices of $\hat{d}$.  When $\hat{d}=3$, the attack is able to 
quickly recover (the sparsest) 15\% of the data set after only 500 tests, but as the number of tests increases, 
the rate of progress slows significantly.  When $\hat{d}=25$, 52\% of the database has been successfully recovered 
after 2000 tests.  Thus, using only a couple thousand nonadaptive tests, we are able to 
clone the friend lists of half (9K out of 18K) of the Facebook users at the University of 
North Carolina.  

We also performed a large-scale nonadaptive Mastermind attack on Slashdot-All with 82,144 users.  
Figure \ref{fig:percentageLargeScale} shows that 
55\% of the strings are recovered after 2,500 
tests and that 81\% of the strings are recovered after 4,000 tests, using $\hat{d} = 50$.
Even when using a $\hat{d}$ which may be suboptimal, our empirical results suggest that it 
is possible to substantially outperform both the baseline method as well the theoretical 
bounds in Table \ref{tbl:theoretical} in practice, as long as $\hat{d}$ is chosen to be less than $d$.

We also ran the same experiment on Facebook-Uniform for $\hat{d}=108$ (the median distance from $R$). Figure~\ref{fig:percentageLargeScale} shows that over 70\% of the data set can be reconstructed with 10,000 tests, despite the fact
that the vector length of this data set is huge ($n=2,261,577$).   Since Facebook-Uniform contains an unbiased sample of 
users, these users are representative of the global Facebook population.  Furthermore, our theory states that
the number of required tests increases at a rate of at most $\log(g)$ where $g$ is the number of Facebook users.  In fact, the theoretical
number of tests needed to guarantee that 50\% of a 300-million user Facebook network is cloned is less
than 20,000 (assuming
$d_{\text{median}}=130$)\footnote{According to \url{http://www.facebook.com/press/info.php?statistics}, $d_{\text{mean}}=130$, and so 
$d_{\text{median}}$ should be even smaller, suggesting that the Mastermind attack can be even more efficient.}.  These results imply that an attacker may be able to recover 
over half of the global Facebook social network with several thousands of seemingly
innocuous nonadaptive Mastermind queries.  

It is worth noting that experiments have also been conducted on a variety of other data sets not
mentioned in this paper -- the nonadaptive Mastermind attack also performs 
very well on those data sets.  Results on cloning databases of binary attribute 
vectors (i.e. where the number of colors $c=2$) are described in previous work~\cite{AsuncionGoodrich2010}.

Our empirical results have shown that there is sensitivity to the choice of $\hat{d}$ in certain cases. 
One possible improvement 
is to use a tiered approach, where $\hat{d_1}$ is used to construct the first 5000 tests, 
$\hat{d_2}$ is used to construct the next 5000 tests, etc.   Each $\hat{d_i}$ could correspond 
to different mode.  Nonetheless, even when using a single $\hat{d}$, our results 
demonstrate that it is possible to clone a large fraction of a sparse 
database, by simply performing a nonadaptive Mastermind attack.

\section{Conclusion}
We have studied the Mastermind cloning attack, both from a
theoretical and experimental perspective, and have shown its
effectiveness in being able to copy the contents of a string database
through a sublinear number of string-comparison queries.  Furthermore, our
approach benefits from being fully nonadaptive and surreptitious in nature 
(due to the randomized
query construction), which is useful in real-world settings.

A natural direction for future work, of course, is on methods for
defeating our nonadaptive Mastermind attack, which we have not addressed in this paper.
Certainly, having Alice randomly permute the responses from her database 
with each query could help, since it would make it harder (but not
necessarily impossible) for Bob to correlate responses between
different queries. Of course, requiring Alice to always randomly permute
her responses would take extra time, and it may also require
additional space if she needs to store every response query so that
users can refer back to her responses for other, limited types of selection
queries she may allow.
So the technique of using random permutations can reduce the
risks associated with the Mastermind cloning attack, but it doesn't
necessarily eliminate these risks, and it comes with additional costs.

\subsection*{Acknowledgements}
A shorter version of the material in this paper (which only dealt with binary 
attribute vectors) was presented by the authors at
the ACM Workshop on Privacy in the Electronic Society (WPES), Chicago, October 2010.
We would like to thank Pierre Baldi and Padhraic Smyth for respectively 
suggesting the privacy of genomic data and Facebook relationships as research topics.
We are also grateful to Athina Markopoulou and Minas Gjokas for providing the Facebook-Uniform data.
We would also like to acknowledge David Eppstein and Daniel Hirschberg
for helpful discussions regarding the group-testing topics of this paper.
This research was supported by Office of Naval Research under MURI grant N00014-08-1-1015 and
by the National Science Foundation under grants 0724806, 0713046, 0847968, and an NSF Graduate Fellowship.

\bibliographystyle{plain}
\bibliography{goodrich,baldi,info,ref,genome,math,security,sigproc,k_anonymity,clone,cgt,experiments,social,related,extra,data}

\end{document}